\DeclareMathOperator{\supp}{supp}
\numberwithin{equation}{section}
\author{Sergio Albeverio\\ Institut f{\"u}r Angewandte Mathematik, Universit{\"a}t Bonn, D 53115, Bonn, Germany\footnote{SFB611; IZKS,Bonn; BiBoS,Bielefeld,Bonn; Acc. Arch. (US), Mandrisio}\\
Witold Karwowski\\Institute of Physics, Opole University, Opole, Poland}
\title{Jump processes on leaves of multibranching trees}
\date{\today}
\begin{document}

\newtheorem{defi}{Definition}[section]
\newtheorem{rem}[defi]{Remark}
\newtheorem*{rem*}{Remark}
\newtheorem{prop}[defi]{Proposition}
\newtheorem{cor}[defi]{Corollary}
\newtheorem{lem}[defi]{Lemma}
\newtheorem{thm}[defi]{Theorem}

\newcommand{\N}{\mathbb{N}}
\newcommand{\Z}{\mathbb{Z}}
\newcommand{\R}{\mathbb{R}}
\newcommand{\SB}{\mathbb{S}_B}
\newcommand{\E}{\mathcal{E}}
\newcommand{\K}{\mathcal{K}}
\newcommand{\Bmi}[2]{B_{\{\alpha^{#1}\}_{-(#2)}}}
\newcommand{\ami}[1]{{\{\alpha}\}_{-(#1)}}
\newcommand{\bmi}[1]{{\{\beta}\}_{-(#1)}}
\newcommand{\B}[1]{B_{\{\alpha\}_{#1}}}
\newcommand{\chimi}[1]{\chi_{\{\alpha\}_{-(#1)}}}
\newcommand{\chimibet}[1]{\chi_{\{\beta\}_{-(#1)}}}
\newcommand{\Bet}[1]{B_{\{\eta\}_{-(#1)}}}

\maketitle

\begin{abstract}
The p-adic numbers have found applications in a wide range of diverse fields of scientific research. In some applications the algebraic properties of p-adics enter as an indispensable ingredient of the theory. Another class of applications has to do with hierarchical tree like systems. In this context the applications are based on the well known correspondence between p-adics and the trees with $p$ branches emerging from every branching point. Then the algebraic structure does not enter and p-adics are used merely as a labeling system for the tree branches. We introduce a space of sequences denoted by ${\SB}$ suitable for labeling the trees with varying number of branches emerging from the branching points. We introduce a non Archimedean metric in ${\SB}$ and describe the basic topological properties of ${\SB}$. We also demonstrate that the known constructions of the stochastic processes on p-adics carry over to the stochastic processes on ${\SB}$ and hence on the corresponding trees.    
\end{abstract}

\noindent
{\bf Keywords:}p-adics, trees, stochastic processes, jump processes, ultrametric spaces.
Mathematics Subject Classification (2000).Primary 60J75; Secondary 60J35, 11K41.
\section{Introduction}

Over the last twenty years a growing interest in p-adic numbers, see, e.g. \cite{Ko} and more generally in local fields can be observed also in connection with various applications beside those in number theory where they originated. In fact there is intensive research using p-adics in physics \cite{LiViM}, \cite{RTV}, \cite{VlaVoZ}, \cite{Kh1}, neural networks \cite{ATi}, cognitive systems \cite{Kh2}, \cite{Kh3} and in the theory of stochastic processes  \cite{E1}, \cite{E2}, \cite{AKa}, \cite{Koc}, \cite{DelTa}, \cite{KaViM}, \cite{AKaZ}, \cite{Kan}, \cite{Y} and other fields involving hierarchical systems \cite{ASTa}, \cite{Pa}. When analysing different lines of investigations one realizes that some of them like for instance applications in string theory \cite{VlaVoZ}, and in theory of p-adic distributions \cite{AKhSh1}, \cite{AKhSh2}, \cite{AKhSh3},\cite{Kh1}, \cite{Kh2}, \cite{AKhS4} rely on algebraic and topological properties of the local fields. On the other hand in such applications as the study of spin glasses \cite{RTV}, neural networks \cite{ATi}, and turbulence \cite{LiViM} the p-adic numbers appear as a well elaborated simple framework for describing a hierarchical structure. The algebraic properties of p-adics either do not enter into consideration at all, or play rather the role of a convenient but dispensable technical assumption.\\
There is a well known relation between the p-adics and a class of tree like graphs. Given a prime number $p$ the tree corresponding to the p-adic field $\mathbb{Q}_{p}$ has exactly $p$ branches emerging from every branching point. If a system under consideration displays a hierarchical structure, then it is natural to represent it by a tree, and then the trees related to p-adics appear as handy candidates. In such a situation p-adic numbers merely serve as labels for the tree branches. The advantage of using p-adics as the labels is that the distances between the ends of the branches are given by the p-adic metric.\\
In reality the structure of a hierarchical system may correspond to a tree which is more complicated than those labeled by p-adics. The number of branches emerging from branching points of the suitable tree may vary, and does not have to be a prime number. In order to carry over the investigations originally based on p-adic trees to the more complicated cases one needs an appropriate labeling system.\\
In this note we introduce such a system and argue that the results obtained for p-adics which do not rely on the algebraic structure of $\mathbb{Q}_{p}$ can indeed be extended to the more general class of trees. To support our claim we extend the results of \cite{AKa} to the case of general trees. It is remarkable that although the procedure is more complicated the main line of reasoning carry over without major changes, yielding essentially the same output of "rather explicit results" which was the main advantage of the approach presented in \cite{AKa} for the p-adics case. Let us remark that chaotic processes on trees have been studied by other means (e.g. as particular cases of stochastic processes on metric spaces \cite{A}, \cite{F}, \cite{FOT}, \cite{MaR}, \cite{Sturm}, and in connection with statistical mechanical problems like in percolation theory \cite{AlEv}). Our approach is more direct and permits e.g. to compute spectra of the generators.\\
The paper is organized as follows: In section 2. we introduce a class of sequence spaces denoted by ${\SB}$. The subscript B stands for a numerical function defined on sequences. A non Archimedean metric is defined on every ${\SB}$ and basic topological properties of ${\SB}$ are proved. We also establish a one to one correspondence between the spaces ${\SB}$ and the trees with a finite number of branches emerging from the branching points.\\
In section 3. we formulate and solve the Chapman-Kolmogorov equations and find the transition functions for a class of stochastic processes on ${\SB}$. In section 4. we consider the corresponding Markovian semigroup and provide a complete spectral description of their generators. We also give explicit formulae for the corresponding Dirichlet form.         

\section{The state space}

As usual we denote by $\Z$, $\N$ and $\N_0$ the set of integers, positive integers and
non-negative integers respectively. For any $k\in \Z$ let $S_k$ be the family of all  
sequences
$\{ \alpha_i\}_{i \leq k}$ such that $\alpha_i \in \N_0$ and $\alpha_i=0$ for all $i\leq N$ for some $N\leq k$, $N \in \Z$. Put
\[S=\bigcup_{k\in \Z} S_k.\]
Let $\alpha_{k+1}\in \N_0$. Then the product $\{\alpha_i\}_{i\leq k}\times\{\alpha_{k+1}\}$ can be identified with
$\{\alpha_i\}_{i\leq k+1}$,
\begin{equation}\label{2-1}
\{\alpha_i\}_{i\leq k}\times\{\alpha_{k+1}\}=\{\alpha_i\}_{i\leq k+1}.
\end{equation}
To simplify notations we put \[\{\alpha\}_k:=\{\alpha_i\}_{i\leq k}.\]
If $\alpha_i=0$ for all $i\leq k$ then we write $\{\alpha\}_k=\{0\}_k$.
\begin{defi}
Let $\B{k}$ be a function defined on $S$ with values in $\N\setminus \{1\}$.
\begin{enumerate}
\item We say that $\{\alpha\}_{k+1}$ is the $B$-product of $\{\alpha\}_k$ and $\{\alpha_{k+1}\}$ iff
\begin{equation}\label{2-2}
\{\alpha\}_{k+1}=\{\alpha\}_{k}\times \{\alpha_{k+1}\}
\end{equation}and
\begin{equation}\label{2-3}
0\leq \alpha_{k+1}\leq \B{k}-1.
\end{equation}
\item We say that $\{\alpha\}_{k+l},\ l\in \N$ is the $B$-product of $\{\alpha\}_k$ and the
ordered l-tuple $\{\alpha_{k+1},\dots ,\alpha_{k+l}\}$ 
\begin{displaymath}
\{\alpha\}_{k+l}=\{\alpha\}_k\times \{\alpha_{k+1},\dots ,\alpha_{k+l}\}
\end{displaymath}
iff
\begin{equation}\label{2-4}
\{\alpha \}_{k+l}=(\dots ((\{\alpha\}_k\times \{\alpha_{k+1}\})\times \{\alpha_{k+2}\})\times \dots \times \{\alpha_{k+l}\}),
\end{equation}
where all products are $B$-products in the sense of 1. We then write
\begin{equation}\label{2-5}
\{\alpha \}_{k+l}=\{\alpha\}_k\times \{\alpha_{k+1}\}\times \dots \times \{\alpha_{k+l}\}.
\end{equation}
\end{enumerate}
\qed\end{defi}
\begin{rem}
Whenever we write a formula like the right side of \eqref{2-5} we always mean the $B$-product.
\qed\end{rem}
\begin{defi}
Given a function $\B{k}$ as in Def. 2.1, we define $S_B \subset S$ by
\begin{enumerate}
\item $\{0\}_k \in S_B$ for all $k \in \Z$,
\item $\{\alpha \}_{k+1}\in S_B$ iff $\{\alpha\}_{k+1}=\{\alpha\}_k\times \{\alpha_{k+1}\},$ where $\{\alpha\}_k \in S_B$.
\end{enumerate}
\qed\end{defi}
It is easy to see that the following proposition holds:
\begin{prop}\label{p2-4}
$\{\alpha\}_k \in S_B$ iff there is $l\in \N$ such that
\begin{displaymath}
\{\alpha \}_k=\{0\}_{k-l}\times \{\alpha_{k-l+1}\}\times \dots \times \{\alpha_k\}.
\end{displaymath}
\qed\end{prop}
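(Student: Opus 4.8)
The plan is to prove the two implications separately, exploiting the recursive shape of Definition~2.3 together with the defining ``eventually-zero-to-the-left'' property of the sequences in $S$.

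For sufficiency ($\Leftarrow$), suppose $\{\alpha\}_k=\{0\}_{k-l}\times\{\alpha_{k-l+1}\}\times\dots\times\{\alpha_k\}$ is a $B$-product for some $l\in\N$. I would argue by induction on $l$. For $l=1$ we have $\{\alpha\}_k=\{0\}_{k-1}\times\{\alpha_k\}$; since $\{0\}_{k-1}\in S_B$ by part~1 of Definition~2.3, part~2 gives $\{\alpha\}_k\in S_B$. For the inductive step I would set $\{\alpha\}_{k-1}:=\{0\}_{k-l}\times\{\alpha_{k-l+1}\}\times\dots\times\{\alpha_{k-1}\}$; by the definition \eqref{2-4} of the iterated $B$-product this is again a $B$-product of $\{0\}_{(k-1)-(l-1)}$ with an $(l-1)$-tuple, so the induction hypothesis yields $\{\alpha\}_{k-1}\in S_B$. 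Since \eqref{2-4} exhibits $\{\alpha\}_k=\{\alpha\}_{k-1}\times\{\alpha_k\}$ as the outermost (single) $B$-product in the sense of part~1 of Definition~2.1, part~2 of Definition~2.3 again gives $\{\alpha\}_k\in S_B$.

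For necessity ($\Rightarrow$), suppose $\{\alpha\}_k\in S_B$. The idea is to read Definition~2.3 backwards: whenever $\{\alpha\}_{j}\in S_B$ with $\{\alpha\}_j\neq\{0\}_j$, part~2 forces its restriction $\{\alpha\}_{j-1}$ (which, via the identification \eqref{2-1}, is uniquely the restriction of $\{\alpha\}_k$ to indices $\leq j-1$) to lie in $S_B$ and forces the admissibility bound \eqref{2-3}, i.e.\ $0\le\alpha_j\le B_{\{\alpha\}_{j-1}}-1$. I would descend this chain from $j=k$ downwards. Since $\{\alpha\}_k\in S$, either $\{\alpha\}_k=\{0\}_k$ (a trivial case, handled by taking all appended entries zero), or there is a smallest index $j_0$ with $\alpha_{j_0}\neq0$, and then $\{\alpha\}_{j}=\{0\}_{j}$ for every $j<j_0$. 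Hence the descent terminates exactly when it reaches $\{\alpha\}_{j_0-1}=\{0\}_{j_0-1}$, after collecting the bounds \eqref{2-3} for all $j_0\le j\le k$. Setting $l:=k-j_0+1\in\N$ then displays $\{\alpha\}_k$ as the $B$-product $\{0\}_{k-l}\times\{\alpha_{k-l+1}\}\times\dots\times\{\alpha_k\}$, as required.

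The one point deserving care, and the only genuine obstacle, is the termination of this backward descent: the recursion in Definition~2.3 has no intrinsic lower end, since the index set is all of $\Z$. What grounds it is precisely the membership $\{\alpha\}_k\in S$, which guarantees that the left tail is identically zero and therefore that some restriction equals $\{0\}_{j_0-1}$, a base-case element of $S_B$ by part~1. I would make this finiteness explicit, so that the integer $l$ is genuinely produced and the chain of admissibility conditions \eqref{2-3} is finite.
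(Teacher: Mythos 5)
Your proof is correct: the sufficiency direction by induction on $l$ and the necessity direction by descending the recursion of Definition~2.3 until the eventually-zero left tail (guaranteed by membership in $S$) produces a base case $\{0\}_{j_0-1}$ are exactly the straightforward unwinding of the definitions. The paper itself offers no proof, stating only that it is ``easy to see''; your argument supplies precisely the details the authors leave implicit, including the one point genuinely worth making explicit, namely why the backward descent terminates.
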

\begin{defi}
We say that a sequence $\{\alpha_i\}_{i\in \Z}$ belongs to the set $\SB$ iff $\{\alpha\}_k \in S_B$ for all $k\in \Z$.
\qed\end{defi}
To simplify notations we write
\begin{equation}\label{2-6}
\alpha :=\{\alpha_i\}_{i\in \Z}.
\end{equation}
Let $q$ be a real number, $q>1$. For any pair $\alpha ,\beta  \in \SB$ we define

\begin{equation}\label{2-7}\begin{split}
\rho_q(\alpha ,\alpha ) &= 0 \\
 \rho_q(\alpha ,\beta ) &= q^{-i_0},
\end{split}
\end{equation}

where $i_0$ is such that $\alpha_{i_0} \neq \beta_{i_0}$ and $\alpha_i = \beta_i$ if $i<i_0$. It is easy to see that the following proposition holds:
\begin{prop}\label{p2-6}
$\rho_q$ is a metric on $\SB$ satisfying the non-Archimedean triangle inequality
\begin{equation}\label{2-8}
\rho_q(\alpha ,\beta )\leq \max\{\rho_q(\alpha ,\gamma ),\rho_q(\gamma ,\beta )\}.
\end{equation}
\qed\end{prop}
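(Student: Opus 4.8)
The plan is to first secure well-definedness of $\rho_q$ and then verify the metric axioms, deducing the ordinary triangle inequality from the stronger inequality \eqref{2-8}. The one point requiring care is that the index $i_0$ in \eqref{2-7} actually exists. To see this, recall that for any $\alpha\in\SB$ the truncation $\{\alpha\}_0=\{\alpha_i\}_{i\le 0}$ lies in $S_B\subset S_0$, so by the defining property of $S_0$ there is an $N_\alpha\in\Z$ with $\alpha_i=0$ for all $i\le N_\alpha$; equivalently, every element of $\SB$ vanishes sufficiently far to the left. Hence, given distinct $\alpha,\beta\in\SB$, setting $N=\min\{N_\alpha,N_\beta\}$ we have $\alpha_i=0=\beta_i$ for all $i\le N$, so the set $\{i\in\Z:\alpha_i\neq\beta_i\}$ is nonempty and bounded below. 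It therefore has a least element $i_0$, and \eqref{2-7} defines $\rho_q(\alpha,\beta)=q^{-i_0}$ unambiguously.

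Now the elementary axioms. Since $q>1$ we have $q^{-i_0}>0$, so $\rho_q(\alpha,\beta)\ge 0$, with equality—by the first line of \eqref{2-7} together with the previous paragraph—exactly when $\alpha=\beta$. Symmetry is immediate: the description of $i_0$ as the least index at which the two sequences disagree is unchanged under interchanging $\alpha$ and $\beta$, so $\rho_q(\alpha,\beta)=\rho_q(\beta,\alpha)$.

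It remains to prove \eqref{2-8}, which contains the ordinary triangle inequality as a special case. Let $\alpha,\beta,\gamma\in\SB$; if any two of them coincide the inequality is trivial, so assume they are pairwise distinct and write $i_0,j_0,k_0$ for the least disagreement indices of the pairs $(\alpha,\beta)$, $(\alpha,\gamma)$, $(\gamma,\beta)$ respectively. Because $q>1$, the map $i\mapsto q^{-i}$ is decreasing, so $\max\{q^{-j_0},q^{-k_0}\}=q^{-\min\{j_0,k_0\}}$ and \eqref{2-8} is equivalent to the purely combinatorial statement
\[
i_0\ge \min\{j_0,k_0\}.
\]
To establish it, put $m=\min\{j_0,k_0\}$. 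For every index $i<m$ we have $i<j_0$, whence $\alpha_i=\gamma_i$, and $i<k_0$, whence $\gamma_i=\beta_i$; therefore $\alpha_i=\beta_i$ for all $i<m$. Thus $\alpha$ and $\beta$ cannot first disagree before $m$, i.e. $i_0\ge m$, as required. The main (and essentially only) obstacle is thus the well-definedness argument of the first paragraph; once $i_0$ is known to exist, the remaining steps are the standard verification that a ``first-disagreement'' exponent yields an ultrametric.
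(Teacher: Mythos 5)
Your proof is correct and complete; the paper itself offers no proof of Proposition \ref{p2-6}, dismissing it with ``it is easy to see,'' and your argument is exactly the standard verification that the first-disagreement index yields an ultrametric. Your care over the existence of $i_0$ (using that every element of $\SB$ vanishes for sufficiently small indices) is the one point genuinely worth writing down, and you handle it correctly.
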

It is clear that for any $q,q'>1$ the metrics $\rho_q$ and $\rho_{q'}$ are equivalent. Thus we fix a real number $q>1$ throughout the paper and drop the subscript $q.$ Set
\begin{equation}\label{2-9}
\mathbb{S}^k_B:=\{\alpha \in \SB;\alpha_i =0\ for\ i\geq k\},
\end{equation}
and
\begin{equation}\label{2-10}
\mathbb{S}_{B,0}:=\bigcup_{k\in \Z}\mathbb{S}^k_B.
\end{equation}
\begin{rem}\label{r2-7}
Since $(\SB,\rho )$ are going to constitute the basic state space of our stochastic processes it might be useful to have in mind an intuitive picture of it. $\SB$ is a space of sequences $\{\alpha_{k}\}$, which can be thought of as leaves at the ends of branches of a tree. The distance $\rho (\alpha,\beta)$ between two leaves $\alpha$ and $\beta$ can be thought of as the length of the branch from $\alpha$ to the nearest branching point connecting $\alpha$ and $\beta$ and then down to $\beta$ (where the tree is visualized vertically with its origin at the top). Precise relations between $\SB$`s and the trees will be provided later in this section. 
\end{rem} 
\begin{prop}\label{p2-8}
$\SB$ equipped with the metric $\rho$ is a complete metric space and $\mathbb{S}_{B,0}$ is a dense subset of it.
\end{prop}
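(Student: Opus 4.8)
The plan is to treat the two assertions separately, relying throughout on the elementary reformulation of the metric \eqref{2-7}: for $\alpha,\beta\in\SB$ and $k\in\Z$ one has $\rho(\alpha,\beta)<q^{-k}$ if and only if $\alpha_i=\beta_i$ for all $i\le k$. This turns every metric statement into a statement about coincidence of the upper truncations $\{\alpha\}_k$, which is precisely the language in which $\SB$ and $S_B$ were defined, so that membership and closure can be checked truncation by truncation.

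For the density of $\mathbb{S}_{B,0}$, I would fix $\alpha\in\SB$ and $k\in\Z$ and define $\beta$ by $\beta_i=\alpha_i$ for $i<k$ and $\beta_i=0$ for $i\ge k$. Then $\beta$ first differs from $\alpha$ (if at all) at an index $\ge k$, so $\rho(\alpha,\beta)\le q^{-k}$, which tends to $0$ as $k\to\infty$. The only point requiring verification is that $\beta\in\SB$, i.e.\ that $\{\beta\}_m\in S_B$ for every $m$. For $m<k$ this is immediate since $\{\beta\}_m=\{\alpha\}_m$. For $m\ge k$ it follows by induction from $\{\beta\}_{k-1}=\{\alpha\}_{k-1}\in S_B$, using that appending the digit $0$ is always an admissible $B$-product: since $B$ takes values in $\N\setminus\{1\}$, the bound \eqref{2-3} reads $0\le 0\le B_{\{\beta\}_{m-1}}-1$, which holds trivially (equivalently, Proposition \ref{p2-4} applies with the evident $l$). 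Hence $\beta\in\mathbb{S}_B^{k}\subset\mathbb{S}_{B,0}$.

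For completeness, let $\{\alpha^{(n)}\}_n$ be a Cauchy sequence in $\SB$. By the reformulation above, for each fixed $k$ the Cauchy condition forces the truncations $\{\alpha^{(n)}\}_k$ to be eventually constant in $n$; in particular each coordinate map $n\mapsto\alpha^{(n)}_i$ is eventually constant, and I define $\alpha_i$ to be its eventual value and set $\alpha:=\{\alpha_i\}_{i\in\Z}$. Convergence $\alpha^{(n)}\to\alpha$ is then immediate: given $k$, for all large $n$ we have $\alpha^{(n)}_i=\alpha_i$ for every $i\le k$, whence $\rho(\alpha^{(n)},\alpha)\le q^{-(k+1)}$.

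The step carrying the real content is showing $\alpha\in\SB$, that is, $\{\alpha\}_m\in S_B$ for every $m$. Here I would again exploit that $\SB$-membership is defined one truncation at a time: fixing $m$ and choosing $n$ large enough that $\alpha^{(n)}_i=\alpha_i$ for all $i\le m$, one obtains the exact identity $\{\alpha\}_m=\{\alpha^{(n)}\}_m$, and the right-hand side lies in $S_B$ because $\alpha^{(n)}\in\SB$. This simultaneously supplies the ``eventually zero as $i\to-\infty$'' property inherited from $S$. I expect this to be the main, albeit mild, obstacle: one must be certain that no uniform-in-$m$ control of the Cauchy sequence is needed, and indeed none is, precisely because $\SB$ was defined as the set of sequences all of whose upper truncations lie in $S_B$, so the closure property transfers one truncation at a time.
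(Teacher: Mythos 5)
Your proposal is correct and follows essentially the same route as the paper: both prove completeness by observing that the Cauchy condition freezes each truncation $\{\alpha^{(n)}\}_k$ for large $n$, defining the limit truncation by truncation and checking $S_B$-membership levelwise, and both prove density by cutting off $\alpha$ at level $k$ and appending zeros. Your only addition is the explicit verification that appending the digit $0$ is an admissible $B$-product (the paper dismisses this with ``clearly''), which is a welcome but minor elaboration rather than a different argument.
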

\begin{proof} Let $\alpha^n \in \SB$, $n\in \N$ be a Cauchy sequence in the metric $\rho$. Then for any $k\in \N$ there is $N_k\in \N$ such that $\rho(\alpha^n,\alpha^m)<q^{-k}$ for all $n,m>N_k$. It follows that $\{\alpha^n\}_k=\{\alpha^m\}_k$. Without loss of generality we can assume $N_k$ to be increasing. For any $k\in \N$ choose $n_k>N_k$. Then
\begin{displaymath}
\{\alpha^{n_k}\}_k=\{\alpha^{n_{k+1}}\}_k
\end{displaymath}
and
\begin{equation}\label{2-11}
\{\alpha^{n_{k+1}}\}_{k+1}=\{\alpha^{n_k}\}_k\times \{\alpha^{n_{k+1}}_{k+1}\}.
\end{equation}
Define $\alpha=\{\alpha_i\}_{i\in \Z}$ by
\begin{equation}\label{2-12}
\{\alpha \}_l=\{\alpha^{n_1}\}_l\ \textrm{for}\ l\leq 1
\end{equation}
and
\begin{equation}\label{2-13}
\{\alpha\}_l=\{\alpha^{n_1}\}_1\times \{\alpha^{n_2}\}_2\times \dots \times \{\alpha^{n_l}_l\}=\{\alpha^{n_l}\}_l\ \textrm{for}\ l>1.
\end{equation}
Since $\alpha^n \in \SB$ it follows from \eqref{2-11}, \eqref{2-12}, \eqref{2-13} that $\alpha \in \SB$. Let $n>N_k$. Then $\rho (\alpha^{n_k},\alpha^n)<q^{-k}$ and by \eqref{2-13} $\rho(\alpha ,\alpha^{n_k})\leq q^{-(k+1)}$. Consequently $\rho(\alpha, \alpha^n) \leq \max \{\rho(\alpha, \alpha^{n_k}),\rho(\alpha^{n_k},\alpha^n)\}\leq q^{-k}$, which shows that $\alpha^n$ converges to $\alpha$ in the metric $\rho$. Thus $\SB$ is complete.
Let now $\alpha \in \SB$. Define $\alpha^n=\{\alpha\}_n\times \{0\}\times \dots$.  Clearly $\alpha^n \in \mathbb{S}_{B,0}$ and $\alpha^n\to \alpha$ in the metric $\rho$ as $n\to \infty$, which proves that $\mathbb{S}_{B,0}$ is dense in $\SB$.
\end{proof} 

Given $\alpha \in \SB$ and $N \in \Z$ the set
\begin{equation}\label{2-14}
K(\alpha ,q^N)=\{\beta \in \SB;\rho(\alpha, \beta)\leq q^N\}
\end{equation}
will be called a ball of radius $q^N$ centered at $\alpha$. As consequences of \eqref{2-8} we have
\begin{enumerate}
\item If $\beta \in K(\alpha, q^N)$ then $K(\beta ,q^N)=K(\alpha ,q^N)$.
\item $K(\alpha , q^N),K(\beta ,q^N)$ are either disjoint or identical, for any $\alpha,\beta \in \SB$.
\item If $\alpha \in \SB$ and $\alpha_i=0$ for $i<-N$, then $K(\alpha ,q^N)=K(0,q^N)$.
\end{enumerate}
It follows from \eqref{2-14} that $K(\alpha ,q^N)$ is uniquely defined by $\{\alpha\}_{-(N+1)}$ and thus we can identify
\begin{equation}
\label{2-15}
\{\alpha\}_{-(N+1)}=K(\alpha,q^N).
\end{equation}
With this notation one easily sees that
\begin{enumerate}
\addtocounter{enumi}{3}
\item 
\begin{equation}\label{2-16}
K(\alpha ,q^{N+1})=\{\alpha\}_{-(N+2)}=\bigcup_\gamma\{\alpha\}_{-(N+2)}\times \{\gamma\}.
\end{equation}
\end{enumerate}
By construction the union is taken over the values of $\gamma$ satisfying: $0\leq \gamma \leq \Bmi{}{N+2}-1$. Thus the ball $K(\alpha,q^{(N+1)})$ is the union of $\Bmi{}{N+2}$ disjoint balls of radius $q^N$. Take $N,M \in \Z$, $N>M$. Iterating formula \eqref{2-16} we find a family of disjoint balls of radius $q^M$ such that $K(\alpha ,q^N)$ can be expressed as their union. When the function $\B{k}$ is defined, this family depends on $\alpha \in \SB$ and the numbers $N,M$. We denote this family by $\K(\alpha ,N,M)$ and denote by $n(\alpha ,N,M)$ the number of balls in it. Note that $\K(\alpha ,N,M)\subsetneq \K(\alpha ,N+1,M)$. Consequently $n(\alpha ,N,M)$ increases to infinity as $N$ varies from $M+1$ to $+\infty$.\\
Let $M\in \Z$ be given. Then according to 3) for any $\beta \in \SB$ there is $N>M$ such that $\beta \in K(0,q^N)$. Thus
\begin{equation}\label{2-17}
\SB=\bigcup_{N>M}K(0,q^N)
\end{equation}
On the other hand $\beta \in K(0,q^N)$ implies that $\beta$ belongs to one of the balls in the family $\K(0,N,M)$. Set
\begin{equation}\label{2-18}
\K(M):=\bigcup_{N>M}\K(0,N,M).
\end{equation}
Then $\K(M)$ is a countable family of disjoint balls of radius $q^M$
\begin{equation}\label{2-19}
\K(M)=\{K^M_i\}_{i\in \N},
\end{equation}
where $K^M_i$ is a ball of radius $q^M$ and $K^M_i\cap K^M_j=\emptyset$ iff $i\neq j$. As a consequence of \eqref{2-16}, \eqref{2-17} we have
\begin{equation}\label{2-20}
\SB=\bigcup_{i\in \N}K^M_i.
\end{equation}
We conclude this section with following observations. Let $p$ be a prime number and define
$\B{k}=p$ for all $\{\alpha\}_k \in S$. Then $\SB$ is identical with the set of $p$-adic numbers $\mathbb{Q}_p$.
If we put $q=p$ then the $\SB$ metric coincides with the $\mathbb{Q}_p$ metric. It is well known that any $p$-adic ball is both open and compact. The same is true for the balls in $\SB$ in general.
\begin{prop}\label{p2-9}
A ball in $\SB$ is both open and compact.
\end{prop}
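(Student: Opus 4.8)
The plan is to treat the two assertions separately, deriving openness directly from the ultrametric inequality and compactness from the standard equivalence ``complete and totally bounded''. Throughout I fix a ball $K(\alpha, q^N)$.

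For openness I would exploit that $\rho$ takes values only in $\{0\}\cup\{q^m : m\in\Z\}$. Consequently, for any $\beta\in K(\alpha,q^N)$ the condition $\rho(\beta,\gamma)\leq q^N$ is equivalent to $\rho(\beta,\gamma)<q^{N+1}$, so the ball $K(\beta,q^N)$ coincides with the open ball $\{\gamma:\rho(\beta,\gamma)<q^{N+1}\}$ and is therefore a neighbourhood of $\beta$. Since property 1) gives $K(\beta,q^N)=K(\alpha,q^N)$, this exhibits $K(\alpha,q^N)$ as a neighbourhood of each of its points, i.e.\ as an open set.

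For compactness I would first record that $K(\alpha,q^N)$ is closed: either because $\rho(\alpha,\cdot)$ is continuous (indeed $1$-Lipschitz by \eqref{2-8}) and $K(\alpha,q^N)$ is the preimage of $[0,q^N]$, or, more in the spirit of the present setup, because any $\gamma$ outside the ball satisfies $\rho(\alpha,\gamma)>q^N$, whence by property 2) the open ball $K(\gamma,q^N)$ is disjoint from $K(\alpha,q^N)$ and provides an open neighbourhood of $\gamma$ in the complement. By Proposition \ref{p2-8} the ambient space is complete, so the closed subset $K(\alpha,q^N)$ is itself a complete metric space, and it then suffices to prove total boundedness. Given $\varepsilon>0$ I would choose $M\in\Z$ with $M<N$ and $q^M\leq\varepsilon$, which is possible since $q>1$. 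Iterating \eqref{2-16} as described just before the proposition expresses $K(\alpha,q^N)$ as the union of the finite family $\K(\alpha,N,M)$ of disjoint balls of radius $q^M$; these finitely many balls have radius $q^M\leq\varepsilon$ and thus form a finite $\varepsilon$-cover. Combining completeness with total boundedness yields compactness.

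The only genuine content, and hence the step I would watch most carefully, is the finiteness of $n(\alpha,N,M)$: this is precisely where the hypothesis $\B{k}\in\N\setminus\{1\}$ (a finite branching number at every node) enters, since each application of \eqref{2-16} multiplies the count by one such finite factor and only $N-M$ applications occur. It is exactly this finiteness of the branching that rules out the infinite trees for which balls would fail to be compact; everything else is the standard ultrametric bookkeeping already prepared in properties 1)--4) and in Proposition \ref{p2-8}. An alternative, equally short route to compactness would replace total boundedness by a direct sequential argument: all elements of a sequence in $K(\alpha,q^N)$ share the truncation $\{\alpha\}_{-(N+1)}$, and since each subsequent coordinate ranges over a finite set one extracts, by a diagonal procedure, a subsequence whose coordinates stabilize to a limit lying in the ball.
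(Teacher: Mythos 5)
Your argument is correct, and for the openness half it coincides with the paper's: both rest on the observation that $\rho$ only takes values in $\{0\}\cup\{q^m:m\in\Z\}$, so the closed ball of radius $q^N$ equals the open ball of radius $q^{N+1}$. For compactness, however, you take a genuinely different route. The paper argues sequentially: given a sequence in the ball, it uses the decomposition \eqref{2-16} to find at each level a sub-ball containing infinitely many terms, producing a nested sequence of balls whose intersection is a single point to which a subsequence converges (Bolzano--Weierstrass). You instead invoke the characterization ``complete and totally bounded'': the ball is closed, hence complete inside the complete space of Proposition \ref{p2-8}, and iterating \eqref{2-16} finitely many times gives, for each $\varepsilon>0$, a finite cover by balls of radius $q^M\leq\varepsilon$ because each factor $\Bmi{}{k}$ is finite. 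The two arguments use exactly the same combinatorial input --- finite branching at every node, which you rightly flag as the only substantive point --- but package it differently: your version makes the role of completeness of $\SB$ explicit and avoids the (mild) issue, which the paper glosses over, of verifying that the intersection of the nested balls is nonempty and consists of a single element of $\SB$; the paper's version is self-contained and does not need the total-boundedness criterion. Your closing remark sketching the diagonal/subsequence extraction is essentially the paper's proof, so you have in effect both arguments. No gaps.
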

\begin{proof}
Let $\alpha \in \SB$ and $k\in\Z$. Then
$$
\ami{k+1} = \left\lbrace \beta \in \SB; \rho (\alpha ,\beta)\leq q^k\right\rbrace=\left\lbrace \beta \in \SB; \rho (\alpha ,\beta) < q^{k+1}\right\rbrace.
$$
According to the general definition of topology in metric spaces the right hand side denotes an open ball in $\SB$. Thus $\ami{k+1}$ is open.\\
 Consider a sequence $\beta^n,n\in\N$ of elements belonging to the ball $\ami{k+1}$.\\
Since $\ami{k+1}=\cup_{\gamma_1} \ami{k+1} \times \{\gamma_1\}$, where $0\leq\gamma_1 \leq \Bmi{}{k+1}-1$, there is a value of $\gamma_1$ such that infinitely many elements of the sequence $\beta^n$ belong to the ball $\ami{k+1}\times \{\gamma_1\}$. We choose one of these elements and denote it by $\beta^{n_1}$. Iterating this procedure we obtain a descending sequence of balls $\ami{k+1}\times \{\gamma_1\}\times\{\gamma_2\}\times \dotsm \times \{\gamma_i\}$ and a subsequence of $\beta^n:$\\
$$
\beta^{n_i} \in \ami{k+1} \times \{\gamma_1\}\times \dotsm \{\gamma_i\}.
$$
The intersection $\cap_{i\in\N} \ami{k+1}\times \{\gamma_1\}\times \dotsm \times \{\gamma_i\}$ contains exactly one element
$
\beta = \left\lbrace \dotsc ,\alpha_{-(k+2)},\alpha_{-(k+1)},\gamma_1,\gamma_2,\dotsc \right\rbrace
$
and clearly $\beta^{n_i} \to \beta$ as $i \to \infty$. Hence by the Bolzano-Weierstrass theorem the ball $\ami{k+1}$ is compact.
\end{proof}
Put $\{\mathbf{x}^{n}\}_{n\in{\Z}}$ for a sequence of points in $\R^2$, then we have $(\mathbf{x}^{n}\in{\R^{2}})$.
Define a family $\mathcal{X}$ of sequences by
\begin{equation}\label{2-21}
 \mathcal{X} = \{X\subset{\R}^{2}; X=\{\mathbf{x}^{n}\}_{n\in{\Z}},\left|\mathbf{x}^{n+1}-\mathbf{x}^{n}\right|=2^{-(n+3)}\}.
 \end{equation}
If $X\in{\mathcal{X}}$ then the set
\begin{equation}\label{2-22}
l_{X}=\bigcup_{n\in{\Z}}\{s\mathbf{x}^{n+1}+(1-s)\mathbf{x}^{n}, 0<s\leq 1\}, X=\{\mathbf{x}^{n}\}_{n\in{\Z}}
\end{equation}
is a continuous planar line composed of segments with the ends $\mathbf{x}^{n}$, $\mathbf{x}^{n+1}.$
We shall define a tree as the union of the lines $l_{X}$
\begin{equation}\label{2-23}
\bigcup_{X\in{\mathcal{T}}}l_{X}.
\end{equation}
The set $\mathcal{T}\subset{\mathcal{X}}$ is defined as follows.
Let $X_{0}=\{{\mathbf{x}}_{0}^{n}\}_{n\in{\Z}}\in{\mathcal{X}}$ be such that $l_{X_{0}}$ is a vertical semi axis directed upright. Then
\begin{enumerate}
\item
$X_{0}\in{\mathcal{T}}$,
\item 
If $X\in{\mathcal{T}}$ then there is $N\in{\Z}$ such that $\mathbf{x}^{n}=\mathbf{x_{0}}^{n}$ for all $n<N$. Moreover
$\mathbf{x}^{n}$, $n\in{\Z}$ is located either on the line $l_{X_{0}}$ or to the right from $l_{X_{0}}.$
\end{enumerate}
For any $X\in{\mathcal{T}}$ and $M\in{\Z}$ define
\begin{equation}\label{2-24}
\mathcal{F}_{XM}=\{Y\in{\mathcal{T}};\mathbf{y}^{n}=\mathbf{x}^{n}, n<M\}.
\end{equation}
\begin{enumerate}
\addtocounter{enumi}{2}
\item
Given $X\in{\mathcal{T}}$ and $M\in{\Z}$ there are $k\in{\N}$ and pairwise different points $\mathbf{y}_{0}=\mathbf{x}^{M},\mathbf{y}_{1}$,...,$\mathbf{y}_{k}$ such that for any $i\in{\{0,1,...,k\}}$
there is $Y\in{\mathcal{F}_{XM}}$ such that $\mathbf{y}^{M}=\mathbf{y}_{i}.$ Conversely if $Y\in{\mathcal{F}_{XM}}$  then $\mathbf{y}^{M}=\mathbf{y}_{i}$ for some $i\in{\{0,1,...,k\}.}$
\end{enumerate}
Under conditions 1-3 the identification of a tree with \eqref{2-23} is consistent with the intuitive picture of a tree. For any $X\in{\mathcal{T}}$ and $n\in{\Z}$ there is a branching at $\mathbf{x}^{n}.$ The number of branches emerging from $\mathbf{x}^n$ is finite and not less than 2. Note that the conditions 1), 2), 3) determine the tree structure rather than the exact locations of its branches.\\
Let a space $\SB$ be given as above. We choose the metric $\rho_{q}$ with $q=2$. We shall construct a tree by defining an injection $L:\SB\longrightarrow\mathcal{X}.$ It will be required that if $L\alpha = X,\alpha \in{\SB}$ then  
$\mathbf{x}^{n+1}(\alpha)=\mathbf{x}^{n+1}(\{\alpha \}_{n+1}).$\\
Then the map $L:\SB\longrightarrow\mathcal{X}$ is defined recursively by
\begin{itemize}
\item
$$L\{0\}=X_{0}$$
\item
If $X=L\alpha$, $\alpha\in{\SB}$ then $\mathbf{x}^{n}$, $n\in{\Z}$ is a branching point with $\Bmi{}{n}$ branches emerging from it. 
The points 
\begin{equation}\label{2-25}
\mathbf{x}^{n+1}(\{\alpha\}_{n},0),\mathbf{x}^{n+1}(\{\alpha\}_{n},1),
...,\mathbf{x}^{n+1}(\{\alpha\}_{n},\Bmi{}{n}-1)
\end{equation}
 are located in the order \eqref{2-25} from left to right and satisfy
\begin{equation}\label{2-26}
\left|\mathbf{x}^{n}(\{\alpha\}_{n-1},\alpha_{n})-\mathbf{x}^{n+1}(\{\alpha\}_{n},i)\right|=2^{-(n+3)}, 
 i=0,1,...,\Bmi{}{n}.
\end{equation}
\end{itemize}
Thus every $\alpha\in{\SB}$ defines a sequence $X=\{\mathbf{x}^{n}\}_{n\in{\Z}}.$ 
Due to \eqref{2-26} the sequence $\mathbf{x}^{n}(\{\alpha\}_{n},\alpha_{n+1})$ converges in the ${\R}^{2}$ 
norm as $n\rightarrow\infty$ to a point $\mathbf{x}.$ We shall say that $\mathbf{x}$ is a leave at the end of the
branch $l_{X}.$ The distance from $\mathbf{x}^{n}$ to $\mathbf{x}$ along $l_{X}$ equals $2^{-(n+2)}.$ Let $X,Y\in\mathcal{T}$ and $\mathbf{x},\mathbf{y}\in{\R^{2}}$ be the leaves of $l_{X}$ and $l_{Y}$ respectively. If $n_{0}$ is the maximal value of $n\in{\Z}$ such that $\mathbf{x}^{n_{0}}\in{l_{X}\cap l_{Y}}$ then the distance $\rho_{tree}(\mathbf{x},\mathbf{y})$ between $\mathbf{x}$ and $\mathbf{y}$ is defined as the distance from $\mathbf{x}$ to $\mathbf{x}^{n_{0}}$ along $l_{X}$ plus the distance from $\mathbf{x}^{n_{0}}$ to $\mathbf{y}$ along $l_{Y}$ hence it equals $2^{-(n_{0}+1)}$:
\begin{equation}\label{2-27}
\rho_{tree}(\mathbf{x},\mathbf{y})=2^{-(n_{0}+1)}.  
\end{equation}
If $X=L\alpha$ and $Y=L\beta$ then $\{\alpha\}_{n_{0}}=\{\beta\}_{n_{0}}$ 
and $\alpha_{n_{0}+1}\neq\beta_{n_{0}+1}.$ Thus $\rho (\alpha,\beta)=2^{-(n_{0}+1)}.$ To summarize
if $X=L\alpha$, $Y=L\beta$ and $\mathbf{x}$,$\mathbf{y}$ are the leaves of $l_{X}$,$l_{Y}$ respectively then
\begin{equation}\label{2-28}
\rho_{tree}(\mathbf{x},\mathbf{y})=\rho (\alpha,\beta).
\end{equation}
In this note we discuss random processes with ${\SB}$ as the state space. By the above discussion it is equivalent 
to say that the state space is the set of branches $l_{X}$ or the set of leaves 
$\mathbf{x}=\lim_{n\rightarrow \infty} \mathbf{x}^{n}$. This observation explains why in the title of the paper we used the concept of "leaves on multibranching trees". 

\section{Stochastic processes on $\SB$}
In this section we shall construct a class of stochastic processes on $\SB$. The main step in this direction will be a construction of the processes on $\K(M)$. Let $\alpha^i \in K^M_i$, $i\in \N$. Then according to \eqref{2-15} 
 
\begin{equation}\label{3-1}
K^M_i=K(\alpha^i,q^M)=\{\alpha^i\}_{-(M+1)}.
\end{equation}
Put $P_{\{\alpha^i\}_{-(M+1)}\{\alpha^j\}_{-(M+1)}}(t),t\in\R_+$ for the transition probability from $K^M_i$ to $K^M_j$ in time t. Whenever possible we shall use the simplified notation
\begin{equation}\label{3-2}
P^{M,M}_{ij}:=P_{\{\alpha^i\}_{-(M+1)}\{\alpha^j\}_{-(M+1)}}(t).
\end{equation}
Thus the forward and backward Chapman-Kolmogorov equations read:
\begin{subequations}\label{KC}
\begin{equation}
\label{KCa}
\dot{P}^{M,M}_{ij}(t)=-\tilde{a}_jP^{M,M}_{ij}(t)+\sum^{\infty}_{\substack{l=1\\ l\neq j}}{\tilde{u}_{lj}P^{M,M}_{il}},
\end{equation}\begin{equation}
\label{KCb}
\dot{P}^{M,M}_{ij}(t)=-\tilde{a}_jP^{M,M}_{ij}(t)+\sum^{\infty}_{\substack{l=1\\ l\neq j}}{\tilde{u}_{il}P^{M,M}_{lj}},
\end{equation}
\end{subequations}
$i,j\in\N$. We impose the initial condition
\begin{equation}\label{3-4}
P^{M,M}_{ij}(0)=\delta_{ij}.
\end{equation}

 The coefficients $\tilde{a}_j$ and $\tilde{u}_{lj}$ will be defined according to the following intuitive requirements.
If the process is in the ball $K(\alpha,q^N)$ at time $t$ then, for small real positive $\Delta t$, the probability that at time $t+\Delta t>t$ the process is outside of $K(\alpha,q^N)$ is set equal to $a(\alpha,N)\Delta t$. We call $a(\alpha,N)$ the intensity of the state $K(\alpha,q^N)$ and assume
\begin{enumerate}
 \renewcommand{\theenumi}{(\roman{enumi})}
\item\label{3i}The intensity of the state $K(\alpha ,q^N)$ depends only on the radius of the ball, i.e. on N, and is independent of $\alpha$. Hence
$$P \left( X_{t+\Delta t}\in (\SB - K(\alpha, q^N))\vert X_t\in K(\alpha,q^N)\right)=a(N)\Delta t.$$

\item\label{3ii} The probability that during the short time $\Delta t$ the process jumps over a distance $q^N$ and reaches a state in $K(\alpha, q^{N'})$ is the same as the probability to reach a state in $K(\beta, q^{N'})$ where $M\leq N'<N$ and
 \mbox{$\rho(\alpha ,\beta )=q^{N'+1}.$}
\item\label{3iii} The coefficients $\tilde{a}_{j}$ satisfy the following relation: 
\begin{equation}\label{3-5}
\tilde{a}_j=\sum^{\infty}_{\substack{l=1\\l\neq j}}\tilde{u}_{jl}.
\end{equation}
\end{enumerate}
To meet requirement \ref{3i} we proceed as follows. We define a sequence $a(N)$, $N \in \Z$ such that
\begin{subequations}
\begin{equation}
a(N)\geq a(N+1)
\end{equation}
and
\begin{equation}
\lim_{N\to \infty}a(N)=0,\ \lim_{N\to -\infty}a(N)=W,
\end{equation}
\end{subequations}
where $W$ is either a positive real number or $+\infty$.
Put
\begin{equation}\label{3-7}
U(N+1)=a(N)-a(N+1).
\end{equation}
Then $U(N+1)\Delta t$ is the probability that the process leaves a ball $K(\alpha , q^N)$ but stays in the ball $K(\alpha ,q^{N+1})$ i.e. it jumps to one of the balls
\begin{equation}\label{3-8}
\{\alpha\}_{-(N+2)}\times \{\gamma \},\ \gamma=0,\dotsc ,\Bmi{}{N+2}-1,\ \gamma \neq \alpha_{-(N+1)}.
\end{equation}
Let $\rho(\alpha^l,\alpha^j)=q^{N+m}$, $m\in \N$. Define
\begin{equation}\label{3-9}
B(\alpha^j,m,M)=(\Bmi{j}{M+m+1}-1)\Bmi{j}{M+m}\dotsm \Bmi{j}{M+2}
\end{equation} 
Set
\begin{equation}\label{3-10}
u(\alpha^j,m,M):=B^{-1}(\alpha^j,m,M)U(M+m).
\end{equation}
It follows from \ref{3i}, \ref{3ii} that $u(\alpha^{j},m,M)\Delta t$ is the probability that the process
jumps from $\{\alpha^l\}_{-(M+1)}$ to $\{\alpha^j\}_{-(M+1)}$ during the time $\Delta t$. Thus we define
\begin{equation}\label{3-11}
\tilde{u}_{lj}=u(\alpha^j,m,M).
\end{equation}
To underline the fact that the elementary balls have radius $q^M$ we write
\begin{equation}\label{3-12}
\tilde{a}_j=\tilde{a}_j(M).
\end{equation}

\begin{lem}\begin{equation}\label{3-13}
\tilde{a}_j(M)=a(M).
\end{equation}
\end{lem}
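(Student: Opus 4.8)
The plan is to evaluate the defining sum for $\tilde{a}_j(M)$ head‑on and show that it telescopes to $a(M)$. First I would invoke requirement \ref{3iii}, i.e. \eqref{3-5}, to write $\tilde{a}_j(M)=\sum_{l\neq j}\tilde{u}_{jl}$, and then substitute the explicit rates: by \eqref{3-9}--\eqref{3-11} one has $\tilde{u}_{jl}=u(\alpha^l,m,M)=B^{-1}(\alpha^l,m,M)\,U(M+m)$, where $m=m(l)\in\N$ is fixed by $\rho(\alpha^j,\alpha^l)=q^{M+m}$. Since $a(N)$ is non‑increasing, every $U(M+m)\geq 0$, so the countable sum has non‑negative terms and may be regrouped according to the value of $m$. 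For fixed $m$ let $\mathcal{A}_m$ denote the family of elementary balls lying at distance exactly $q^{M+m}$ from $\alpha^j$; by Proposition \ref{p2-9} a ball of radius $q^{M+m}$ is compact, hence a finite union of elementary balls, so $\mathcal{A}_m$ is finite. This reduces the claim to
\[
\tilde{a}_j(M)=\sum_{m=1}^{\infty}U(M+m)\sum_{\alpha^l\in\mathcal{A}_m}B^{-1}(\alpha^l,m,M).
\]

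The heart of the argument is the identity $\sum_{\alpha^l\in\mathcal{A}_m}B^{-1}(\alpha^l,m,M)=1$ for every $m$. To prove it I would use that each $\alpha^l\in\mathcal{A}_m$ shares the prefix $\{\alpha^j\}_{-(M+m+1)}$ with $\alpha^j$ and first differs at coordinate $-(M+m)$; hence $\Bmi{l}{M+m+1}=\Bmi{}{M+m+1}=:b_0$ is common to all of them, and the leading factor $(b_0-1)$ of $B(\alpha^l,m,M)$ pulls out. Writing $\gamma_1:=\alpha^l_{-(M+m)}$, the admissible values are the $b_0-1$ integers in $\{0,\dots,b_0-1\}\setminus\{\alpha^j_{-(M+m)}\}$, and for each fixed $\gamma_1$ the remaining factors $\Bmi{l}{M+m}\cdots\Bmi{l}{M+2}$ are exactly the branch numbers met along the path from the radius‑$q^{M+m-1}$ ball $\{\alpha^j\}_{-(M+m+1)}\times\{\gamma_1\}$ down to the elementary ball $\alpha^l$. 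Assigning at each branching point the uniform weight $1/(\text{number of branches})$ identifies $(\Bmi{l}{M+m}\cdots\Bmi{l}{M+2})^{-1}$ with the product of these weights along the path, and the elementary fact that such weights sum to $1$ over all leaves of a finite subtree (a one‑line induction on $m$ using \eqref{2-16}) shows that, for each fixed $\gamma_1$, $\sum(\Bmi{l}{M+m}\cdots\Bmi{l}{M+2})^{-1}=1$. Summing over the $b_0-1$ values of $\gamma_1$ and restoring the factor $(b_0-1)^{-1}$ gives the identity.

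Substituting the identity leaves $\tilde{a}_j(M)=\sum_{m=1}^{\infty}U(M+m)$, and with $U(M+m)=a(M+m-1)-a(M+m)$ the series telescopes: the partial sum up to $m=K$ equals $a(M)-a(M+K)$, which tends to $a(M)$ because $\lim_{N\to\infty}a(N)=0$. This yields $\tilde{a}_j(M)=a(M)$, and in particular the value is independent of $j$, as the notation \eqref{3-13} anticipates.

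I expect the combinatorial identity of the middle step to be the real obstacle, precisely because the branch numbers $B$ are not constant along a path, so $B(\alpha^l,m,M)$ is a genuinely $\alpha^l$‑dependent product rather than a fixed power; a naive attempt to count $\mathcal{A}_m$ runs into an awkward nested sum over varying branchings. The probabilistic reading above (uniform branching weights summing to $1$) is what dissolves this difficulty, replacing the count by a clean telescoping over the levels $-(M+m),\dots,-(M+2)$. The remaining bookkeeping — non‑negativity of the $U(M+m)$ and finiteness of each $\mathcal{A}_m$ from Proposition \ref{p2-9} — is exactly what legitimizes regrouping the series and interchanging the two summations.
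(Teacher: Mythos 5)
Your proof is correct and follows essentially the same route as the paper: apply \eqref{3-5}, regroup the sum over targets by the distance exponent $m$, evaluate the inner sum $\sum_l B^{-1}(\alpha^l,m,M)=1$ by exploiting the product structure of $B$ along the branching levels, and telescope $\sum_{m\geq 1}U(M+m)=a(M)$. The paper carries out your "uniform branching weight" identity as the explicit iterated sum $\sum_{\gamma_m\neq\alpha^j_{-(M+m)}}\sum_{\gamma_{m-1}}\dotsb\sum_{\gamma_1}u(\alpha^l,m,M)=U(M+m)$, so the only difference is presentational.
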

\begin{proof} According to \eqref{3-5} we have $\tilde{a}_j(M)=\sum_{l=1}^{\infty}\tilde{u}_{jl}$. We shall compute the right hand side with $\tilde{u}_{lj}=u(\alpha^j,m,M)$ as defined by \eqref{3-10}, \eqref{3-11}. We have
\begin{equation}\label{3-14}
\sum_{\substack{l=1\\l\neq j}}^{\infty}\tilde{u}_{lj}=\sum_{m=1}^{\infty}\left( \sum_lu(\alpha^l,m,M)\right) ,
\end{equation}
where the summation in the bracket runs over $l\in \mathbb{N}$ such that $\rho(\alpha^j,\alpha^l)=q^{M+m}$. Then
\begin{equation}\label{3-15}
\{\alpha^l\}_{-(M+1)}=\{\alpha^j\}_{-(M+m+1)}\times \{\gamma_m\}\times \dotsm \times \{\gamma_1\}
\end{equation}
where the admissible m-tuples $\{\gamma_m,\dotsc ,\gamma_1\}$ are such that \eqref{3-15} is a B-product. Moreover $\gamma_m\neq\alpha^j_{-(M+m)}$. Hence by \eqref{3-9} and \eqref{3-10}
\begin{displaymath}
\sum_lu(\alpha^l,m,M)=\sum_{\gamma_m\neq \alpha^j_{-(M+m)}}\sum_{\gamma_{m-1}}\dotsb \sum_{\gamma_1}u(\alpha^l,m,M)=U(M+m)
\end{displaymath}
and
\begin{displaymath}
\tilde{a}_j(M)=\sum_{m=1}^{\infty}\left(\sum_lu(\alpha^l,m,M)\right)=\sum_{m=1}^{\infty}U(M+m)=a(M).
\end{displaymath}\end{proof} 
Our next observation follows directly from \eqref{3-9} and \eqref{3-10}. Namely if $m\geq 2$ then
\begin{equation}\label{3-16}
\Bmi{j}{M+2}u(\alpha^j,m,M)=u(\alpha^j,m-1,M+1).
\end{equation}
Let us turn to the problem of solving equations \eqref{KC}. For this we need some preparations. According to \eqref{3-9}-\eqref{3-11} $\tilde{u}_{lj}$ depend explicitly on $\{\alpha^j\}_{-(M+2)}$ and thus $\tilde{u}_{lj}$ is the same for all target states $\{\alpha^j\}_{-(M+2)}\times \{\gamma\}.$ The dependence on the initial state enters only via the distance $\rho(\alpha^l,\alpha^j)$. Thus $\tilde{u}_{lj}=u(\alpha^j,m,M)$ is the same for all $l$ such that
\begin{equation}\label{3-17}
\rho(\alpha^l,\alpha^j)=q^{M+m},
\end{equation}
and we can write \eqref{KCa} in the form
\begin{equation}\label{3-18}
\dot{P}^{M,M}_{ij}(t)=-a(M)P^{M,M}_{ij}+\sum_{m=1}^{\infty}u(\alpha^j,m,M)\sum_lP^{M,M}_{il},
\end{equation}
where the index $l$ in $\sum_lP^{M,M}_{il}$ satisfies \eqref{3-17} for the corresponding value of $m$. In view of \eqref{2-16} and \eqref{3-15} we have
\begin{equation}\label{3-19}
\bigcup_{l,0\leq \rho(\alpha^l,\alpha^j)\leq q^{M+m}}\left\lbrace \alpha^l\right\rbrace =\left\lbrace \alpha^j\right\rbrace{}_{-(M+m+1)},
\end{equation}
which suggests the notation
 \begin{multline}\label{3-20}
P^{M,M+m}_{ij}:=P_{\{\alpha^i\}_{-(M+1)}\{\alpha^j\}_{-(M+m+1)}}\\
:=\sum_{l,0\leq \rho(\alpha^l,\alpha^j)\leq q^{M+m}}P_{\{\alpha^i\}_{-(M+1)}\{\alpha^l\}_{-(M+1)}}.
\end{multline}
With this notation \eqref{3-18} can be written in the form
\begin{multline}\label{3-21}
\dot{P}^{M,M}_{ij}(t)=-\left(a(M)+u(\alpha^j,1,M)\right)P_{ij}^{M,M}\\+\sum_{m=1}^{\infty}\left( u(\alpha^j,m,M)-u(\alpha^j,1,M)\right) P_{ij}^{M,M+m}.
\end{multline}
Let $i$ be fixed and set $\{\alpha^{j'}\}_{-(M+1)}=\{\alpha^j\}_{-(M+2)}\times \{\gamma\}$. Insert $j'$ for $j$ in \eqref{3-21} and sum the equations over $\gamma$. Since $0\leq \gamma\leq \Bmi{j}{M+2}-1$ we add $\Bmi{j}{M+2}$ equations.
We obtain
\begin{multline}\label{3-22}
\dot{P}^{M,M+1}_{ij}\\=-\left[ a(M)+u(\alpha^j,1,M)-\Bmi{j}{M+2}\left( u(\alpha^j,1,M)-u(\alpha^j,2,M)\right) \right]P^{M,M+2}_{ij}\\+\Bmi{j}{M+2}\sum_{m=2}^{\infty}\left( u(\alpha^j,m,M)-u(\alpha^j,m+1,M)\right)P^{M,M+m}_{ij}.
\end{multline}
In view of \eqref{3-7}, \eqref{3-10} and \eqref{3-16}
\begin{multline}\label{3-23}
a(M)+u(\alpha^j,1,M)-\Bmi{j}{M+2}(u(\alpha^j,1,M)-u(\alpha^j,2,M))\\=a(M+1)+u(\alpha^j,1,M+1).
\end{multline}
Thus \eqref{3-22} becomes
\begin{multline}\label{3-24}
\dot{P}^{M,M+1}_{ij}=-\left( a(M+1)+u(\alpha^j,1,M+1)\right) P_{ij}^{M,M+1}\\
+\Bmi{j}{M+2}\sum_{m=2}^{\infty}\left( u(\alpha^j,m,M)-u(\alpha^j,m+1,M)\right)P_{ij}^{M,M+m}.
\end{multline}
Iterating this procedure we obtain for $k\in \N$
\begin{multline}\label{3-25}
\dot{P}_{ij}^{M,M+k}=-\left(a(M+k)+u(\alpha^j,1,M+k)\right)P_{ij}^{M,M+k}\\
+\Bmi{j}{M+2}\dotsm \Bmi{j}{M+k+1}\sum_{m=k+1}^{\infty}\left(u(\alpha^j,m,M)-u(\alpha^j,m+1,M)\right)P_{ij}^{M,M+m}.
\end{multline}
After multiple application of \eqref{3-16} this can be written as
\begin{multline}\label{3-26}
\dot{P}_{ij}^{M,M+k}=-\left(a(M+k)+u(\alpha^j,1,M+k)\right)P_{ij}^{M,M+k}\\
+\sum_{m=1}^{\infty}\left(u(\alpha^j,m,M+k)-u(\alpha^j,m+1,M+k)\right)P_{ij}^{M,M+k+m}.
\end{multline}
We proved this formula for $k \in \N$, but it is also valid for $k=0$ in the sense that for $k=0$ it reduces to \eqref{3-21}. As a result of \eqref{3-4} and \eqref{3-20} we have
\begin{equation}\label{3-27}
P_{ij}^{M,M+k}(0)=\delta_{\{\alpha^i\}_{-(M+k+1)}\{\alpha^j\}_{-(M+k+1)}}.
\end{equation}
The coefficients $u$ on the right side of \eqref{3-26} depend on $\{\alpha^j\}_{-(M+k+2)}$. Thus the functions
$$P_{ij}^{M,M+k}=P_{\{\alpha^i\}_{-(M+1)}\{\alpha^{j'}\}_{-(M+k+1)}},$$
where
$$\{\alpha^{j'}\}_{-(M+k+2)}=\{\alpha^j\}_{-(M+k+2)}$$ i.e.\\ 
$$\{\alpha^{j'}\}_{-(M+k+1)}=\{\alpha^j\}_{-(M+k+2)}\times \{\gamma\}$$
satisfy the same system of equations for every value of $\gamma$:\\
 $0\leq \gamma \leq \Bmi{j}{M+k+2}-1$.
This together with the initial conditions \eqref{3-27} and the uniqueness of the solution yields
\begin{prop}\label{p3-2}
Let $P_{ij}^{M,M+k}$ be the solution of \eqref{3-26} satisfying the initial conditions \eqref{3-27} and $\{\alpha^{j'}\}_{-(M+k+1)}=\{\alpha^j\}_{-(M+k+2)}\times \{\gamma\}$ where $0\leq \gamma \leq \Bmi{j}{M+k+2}-1$.
If $\rho(\alpha^i,\alpha^j)=q^{M+k+1}$ then all functions $P_{ij}^{M,M+k}$ with $\gamma \neq \alpha^i_{-(M+k+1)}$ coincide. If $\rho(\alpha^i,\alpha^j)>q^{(M+k+1)}$ then all functions $P_{ij'}^{M,M+k}$ coincide and are equal to $P_{ij}^{M,M+k}$.
\qed\end{prop}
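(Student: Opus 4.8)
The plan is to turn the remark made just before the statement into a clean application of the uniqueness of solutions. Write $\{\alpha^{j'}\}_{-(M+k+1)}=\{\alpha^j\}_{-(M+k+2)}\times\{\gamma\}$ with $0\leq\gamma\leq\Bmi{j}{M+k+2}-1$, so that the $j'$ range precisely over the $\Bmi{j}{M+k+2}$ balls of radius $q^{M+k}$ sitting inside the fixed ball $\{\alpha^j\}_{-(M+k+2)}$. The coefficient $a(M+k)+u(\alpha^j,1,M+k)$ in \eqref{3-26} depends, through $u(\alpha^j,1,M+k)$, only on $\Bmi{j}{M+k+2}$, hence not on the last coordinate $\gamma=\alpha^j_{-(M+k+1)}$; likewise each $u(\alpha^j,m,M+k)$ with $m\geq1$ involves only $B$-values at indices $\leq -(M+k+2)$. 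The forcing functions $P_{ij}^{M,M+k+m}$ with $m\geq1$ are indexed by $\{\alpha^j\}_{-(M+k+m+1)}$, a truncation that for $m\geq1$ never reaches the coordinate $\gamma$. Consequently both the coefficient of $P_{ij'}^{M,M+k}$ and the whole right-hand side of \eqref{3-26} are the same for every admissible $\gamma$: all the $P_{ij'}^{M,M+k}$ solve one and the same equation and differ at most in their initial data. By uniqueness, two of them coincide for all $t$ whenever their values at $t=0$ agree, so the entire statement reduces to reading off those initial values.

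Next I would extract the initial data from \eqref{3-27}. Using the identification \eqref{2-15}, the symbol $\delta_{\{\alpha^i\}_{-(M+k+1)}\{\alpha^{j'}\}_{-(M+k+1)}}$ equals $1$ exactly when $\alpha^i$ lies in the ball $\{\alpha^{j'}\}_{-(M+k+1)}=\{\alpha^j\}_{-(M+k+2)}\times\{\gamma\}$, i.e. when both $\{\alpha^i\}_{-(M+k+2)}=\{\alpha^j\}_{-(M+k+2)}$ and $\alpha^i_{-(M+k+1)}=\gamma$ hold, and equals $0$ otherwise. Thus the whole case distinction is controlled by the first index $i_0$ at which $\alpha^i$ and $\alpha^j$ disagree, that is, by $\rho(\alpha^i,\alpha^j)=q^{-i_0}$ via its definition \eqref{2-7}.

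The two cases are then immediate. If $\rho(\alpha^i,\alpha^j)=q^{M+k+1}$, then $i_0=-(M+k+1)$, so $\{\alpha^i\}_{-(M+k+2)}=\{\alpha^j\}_{-(M+k+2)}$ while $\alpha^i_{-(M+k+1)}\neq\alpha^j_{-(M+k+1)}$; hence $P_{ij'}^{M,M+k}(0)=1$ for the single value $\gamma=\alpha^i_{-(M+k+1)}$ and $P_{ij'}^{M,M+k}(0)=0$ for all $\gamma\neq\alpha^i_{-(M+k+1)}$, and the uniqueness step forces all of these last functions to coincide, giving the first assertion. If instead $\rho(\alpha^i,\alpha^j)>q^{M+k+1}$, then $i_0\leq-(M+k+2)$, so already $\{\alpha^i\}_{-(M+k+2)}\neq\{\alpha^j\}_{-(M+k+2)}$ and $P_{ij'}^{M,M+k}(0)=0$ for every admissible $\gamma$; the choice $\gamma=\alpha^j_{-(M+k+1)}$ reproduces $P_{ij}^{M,M+k}$ itself, and uniqueness again forces all $P_{ij'}^{M,M+k}$ to coincide and to equal $P_{ij}^{M,M+k}$, which is the second assertion.

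The only genuinely delicate point is the appeal to uniqueness, because \eqref{3-26} is an infinite coupled system rather than a single equation. I expect this to be harmless and would handle it by the reduction already sketched: at the fixed level $k$, equation \eqref{3-26} reads $\dot f=-c\,f+g$ with a scalar coefficient $c$ and forcing $g$ that are identical for every $\gamma$, so the elementary uniqueness theorem for scalar linear ordinary differential equations suffices and no separate well-posedness statement for the full infinite system is needed. Presenting the argument in this reduced form keeps the proof self-contained while matching the intent of the preceding discussion.
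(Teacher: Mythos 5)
Your proposal is correct and follows essentially the same route as the paper, which proves Proposition \ref{p3-2} by the remark immediately preceding it: the coefficients in \eqref{3-26} depend only on $\{\alpha^j\}_{-(M+k+2)}$, so all the $P_{ij'}^{M,M+k}$ satisfy the same equation, and the conclusion follows from the initial conditions \eqref{3-27} together with uniqueness. You merely make explicit the case analysis of the initial data via $\rho(\alpha^i,\alpha^j)$ and the reduction to scalar ODE uniqueness, details the paper leaves implicit.
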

We also have
\begin{prop}\label{p3-3}
Put $N=M+k$.Then the functions $P_{ij}^{N,N}$ satisfy the equations 
\begin{multline}\label{3-28}
\dot{P}_{ij}^{N,N}=-\left(a(N)+u(\alpha^j,1,N)\right)P_{ij}^{N,N}\\
+\sum_{m=1}^{\infty}\left(u(\alpha^j,m,N)-u(\alpha^j,m+1,N)\right)P_{ij}^{N,N+M},
\end{multline}
identical with the equations \eqref{3-26} and the initial conditions 
\begin{equation}\label{3-29}
P_{ij}^{N,N}(0)=\delta_{ij}
\end{equation}
which coincide with the initial conditions \eqref{3-27}.  
\qed \end{prop}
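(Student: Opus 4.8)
The plan is to read Proposition \ref{p3-3} as a self-similarity statement: the system \eqref{3-26} derived for the fine scale $q^M$ should reproduce itself verbatim once one passes to the coarser scale $q^N$, $N=M+k$. First I would simply substitute $N=M+k$ into \eqref{3-26}. The coefficients there, $a(M+k)$ and $u(\alpha^j,m,M+k)$, become $a(N)$ and $u(\alpha^j,m,N)$, and the equation takes the shape
\begin{equation*}
\dot{P}_{ij}^{M,N}=-\left(a(N)+u(\alpha^j,1,N)\right)P_{ij}^{M,N}+\sum_{m=1}^{\infty}\left(u(\alpha^j,m,N)-u(\alpha^j,m+1,N)\right)P_{ij}^{M,N+m},
\end{equation*}
which is formally identical to \eqref{3-28} except that the initial radius is still $q^M$ rather than $q^N$. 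Thus the whole content of the proposition is the assertion that the superscript $M$ on the source may be replaced by $N$, i.e. that $P_{ij}^{M,N}$ is genuinely a transition function between the radius-$q^N$ balls.

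The key step, and the one I expect to be the real obstacle, is to show that $P_{ij}^{M,N}$ depends on the source $\alpha^i$ only through the radius-$q^N$ ball $\{\alpha^i\}_{-(N+1)}$ containing it, so that it is well defined to set $P_{ij}^{N,N}:=P_{ij}^{M,N}$ with $i$ now labelling radius-$q^N$ balls. To prove this I would exploit the triangular structure of \eqref{3-26} in the second superscript: for fixed $i,j$ the right-hand side of the equation for $P_{ij}^{M,M+k}$ involves only the functions $P_{ij}^{M,M+k+m}$ with $m\ge 0$, so the subfamily $\{P_{ij}^{M,M+k}\}_{k\ge k_0}$, where $k_0$ is fixed by $N=M+k_0$, solves a closed system. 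Its coefficients do not involve $\alpha^i$ at all, and its initial data \eqref{3-27}, $P_{ij}^{M,M+k}(0)=\delta_{\{\alpha^i\}_{-(M+k+1)}\{\alpha^j\}_{-(M+k+1)}}$, depend on $\alpha^i$ only through $\{\alpha^i\}_{-(M+k+1)}$, which for $k\ge k_0$ is already determined by the coarser ball $\{\alpha^i\}_{-(N+1)}$. Hence two sources lying in the same radius-$q^N$ ball feed identical data into a system with identical coefficients, and by uniqueness of the solution (within the bounded, probabilistic class constructed) their transition functions coincide.

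Once this source-independence is in hand, the proof concludes by relabelling. Writing $k=k_0+k'$ turns the closed subsystem into \eqref{3-26} with $M$ replaced throughout by $N$ and $k$ by $k'$, which is precisely \eqref{3-28} for the functions $P_{ij}^{N,N+k'}$; the target aggregation $P_{ij}^{N,N+m}=P_{ij}^{M,N+m}$ is consistent because each such function again sees the source only at scale $\ge N$. Finally, evaluating at $t=0$ gives $P_{ij}^{N,N}(0)=\delta_{\{\alpha^i\}_{-(N+1)}\{\alpha^j\}_{-(N+1)}}$, which is $\delta_{ij}$ once $i,j$ are read as labels of radius-$q^N$ balls, exactly \eqref{3-29} and the $k'=0$ specialization of \eqref{3-27}. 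I would note that the earlier Proposition \ref{p3-2}, recording the dual invariance of these functions under refinement of the target, guarantees that the target-side aggregation used here is itself unambiguous, so no further compatibility check is required.
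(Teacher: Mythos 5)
Your proposal is correct, and it in fact supplies more than the paper does: Proposition \ref{p3-3} is stated with no proof at all, being offered as an immediate reading-off of \eqref{3-26} and \eqref{3-27} after the substitution $N=M+k$ (note the paper's typo $P_{ij}^{N,N+M}$ for $P_{ij}^{N,N+m}$ in \eqref{3-28}, which you silently correct). The one piece of content you add that the paper leaves entirely implicit is the source-independence step: that the family $\{P_{ij}^{M,M+k}\}_{k\geq k_0}$ forms a closed triangular subsystem whose coefficients do not involve $\alpha^i$ and whose initial data \eqref{3-27} depend on $\alpha^i$ only through $\{\alpha^i\}_{-(N+1)}$, so that by uniqueness $P_{ij}^{M,N}$ may legitimately be relabelled $P_{ij}^{N,N}$. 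This is exactly the justification needed for the paper's later uses of the proposition in \eqref{3-41}--\eqref{3-42} and \eqref{4-6}, and your appeal to uniqueness is at the same level of rigor the paper itself invokes just before Proposition \ref{p3-2}. No gaps.
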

It follows from \eqref{3-26} that
\begin{multline}\label{3-30}
\Bmi{j}{M+k+2}\dot{P}_{ij}^{M,M+k}-\dot{P}_{ij}^{M,M+k+1}\\
=-\left(a(M+k)+u(\alpha^j,1,M+k)\right)\left(\Bmi{j}{M+k+2}P_{ij}^{M,M+k}-P_{ij}^{M,M+k+1}\right).
\end{multline}
Let us concentrate on the case $i=j$. Then the solution of eq. \eqref{3-30} with the initial condition \eqref{3-27} reads
\begin{multline}\label{3-31}
\Bmi{i}{M+k+2}P_{ii}^{M,M+k}-P_{ii}^{M,M+k+1}\\
=\left(\Bmi{i}{M+k+2}-1\right)\exp\left\lbrace -\left(a(M+k)+u(\alpha^i,1,M+k)\right)t\right\rbrace ,
\end{multline}
or
\begin{multline}\label{3-32}
P_{ii}^{M,M+k}-\Bmi{i}{M+k+2}^{-1}P_{ii}^{M,M+k+1}\\
=\Bmi{i}{M+k+2}^{-1}\left(\Bmi{i}{M+k+2}-1\right)\exp\left\lbrace -\left(a(M+k)+u(\alpha^i,1,M+k)\right)t\right\rbrace.
\end{multline}
The following formula is a direct consequence of \eqref{3-32}
\begin{multline}\label{3-33}
\left(\Bmi{i}{M+k+n+1}\dotsm \Bmi{i}{M+k+2}\right)^{-1}P_{ii}^{M,M+k+n}\\
-\left(\Bmi{i}{M+k+n+2}\dotsm \Bmi{i}{M+k+2}\right)^{-1}P_{ii}^{M,M+k+n+1}\\
=\left(\Bmi{i}{M+k+n+2}\dotsm \Bmi{i}{M+k+2}\right)^{-1}\left(\Bmi{i}{M+k+n+2}-1\right)\\
\exp\left\lbrace -\left(a(M+k+n)+u(\alpha^i,1,M+k+n)\right)t\right\rbrace .
\end{multline}
This formula is valid for $n \in \N$, but we extend it for $n=0$ to be \eqref{3-32}. Summing equations \eqref{3-33} over $n$ from $n=0$ to $n=m-1$ we obtain
\begin{multline}\label{3-34}
P_{ii}^{M,M+k}=\left(\Bmi{j}{M+k+m+1}\dotsm \Bmi{i}{M+k+2}\right)^{-1}P_{ii}^{M,M+k+m}\\
+\sum_{n=0}^{m-1}\left(\Bmi{i}{M+k+n+2}\dotsm \Bmi{i}{M+k+2}\right)^{-1}\left(\Bmi{i}{M+k+n+2}-1\right)\\
\exp\left\lbrace - \left( a(M+k+n)+u(\alpha^i,1,M+k+n)\right)t\right\rbrace .
\end{multline}
The right hand side is split in two parts. The splitting depends on m. Since $\B{N}\geq 2$ the limit as $m \to \infty$ yields
\begin{multline}\label{3-35}
P_{ii}^{M,M+k}(t)=\sum_{n=0}^{\infty}\left(\Bmi{i}{M+k+n+2}\dotsm \Bmi{i}{M+k+2}\right)^{-1}\\
\left(\Bmi{i}{M+k+n+2}-1\right)\exp\left\lbrace - \left(a(M+k+n)+u(\alpha^i,1,M+k+n)\right)t\right\rbrace .
\end{multline}
Now we are ready to find the solution of \eqref{KCa} satisfying the initial conditions \eqref{3-4}. The function $P_{ii}^{M,M}$ is given by \eqref{3-35} with $k=0$ i.e.
\begin{multline}\label{3-36}
P_{ii}^{M,M}=\sum_{n=0}^{\infty}\left(\Bmi{i}{M+n+2}\dotsm \Bmi{i}{M+2}\right)^{-1}\left(\Bmi{i}{M+n+2}-1\right)\\
\exp\left\lbrace -\left(a(M+n)+u(\alpha^i,1,M+n)\right)t\right\rbrace.
\end{multline}
If $\rho(\alpha^i,\alpha^j)=q^{M+k}$, $k\in \N$ then by  multiple application of prop. \ref{p3-2} we obtain
\begin{multline}\label{3-37}
P_{ij}^{M,M}(t)=B^{-1}(\alpha^j,k,M)\left(P{ij}^{M,M+k}-P_{ij}^{M,M+k-1}\right)\\
=B^{-1}(\alpha^j,k,M)B^{-1}_{\{\alpha^i\}_{-(M+k+1)}}\left(\Bmi{i}{M+k+1}-1\right)\\
[ \sum_{n=0}^{\infty}\left( \Bmi{i}{M+k+n+2}\dotsm \Bmi{i}{M+k+2}\right)^{-1}\left( \Bmi{i}{M+k+n+2}-1\right)\\
\exp\left\lbrace -\left(a(M+k+n)+u(\alpha^i,1,M+k+n)\right)t\right\rbrace \\ - \exp\left\lbrace -\left(a(M+k-1)+u(\alpha^i,1,M+k-1)\right)t\right\rbrace].
\end{multline}
Formulas \eqref{3-36}, \eqref{3-37} complete our task to construct transition functions $P_{ij}^{M,M}$ for a class of processes on $\K(M)$. The next step of our discussion will be to construct the transition functions on $\SB$. This can be done as follows:\\
Let $\alpha,\beta \in \SB$. Then $\ami{M+1}=\{\alpha^i\}_{-(M+1)}$ and $\bmi{M+1}=\{\alpha^j\}_{-(M+1)}$ for some $i,j\in\N$. Set $P_{\ami{M+1}\bmi{M+1}}(t) = P_{ij}(t)$. Then by \eqref{3-36}, \eqref{3-37} we have
\begin{multline}
\label{3-38}
P_{\ami{M+1}\ami{M+1}}(t)\\
 = \sum_{n=0}^\infty \left(\Bmi{}{M+n+2} \dotsm \Bmi{}{M+2}\right)^{-1} \left( \Bmi{}{M+n+2}-1\right)\\
\times \exp \left\lbrace-\left(a(M+n)+u(\alpha,1,M+n)\right)t\right\rbrace
\end{multline}
and
\begin{multline}
\label{3-39}
P_{\ami{M+1}\bmi{M+1}}(t) = B^{-1}(\beta ,k,M) \Bmi{}{M+k+1}^{-1}\left( \Bmi{}{M+k+1}-1\right)\\
[\sum_{n=0}^\infty \left(\Bmi{}{M+k+n+2}\dotsm \Bmi{}{m+k+2}\right)^{-1}\left(\Bmi{}{M+k+n+2}-1\right)\\
\times \exp\left\lbrace -\left(a(M+k+n)+u(\alpha,1,m+k+n)\right)t\right\rbrace\\
-\exp\left\lbrace -\left(a(M+k-1)+u(\alpha,1,M+k-1)\right)t\right\rbrace],
\end{multline}
when $\rho (\alpha ,\beta ) = q^{M+k}$.
Let $M,N\in\Z$ and $M\leq N$. Then $\bmi{N+1}$ is an union of the balls of radius $q^M$ i.e.
\begin{equation}\label{3-40}
\bmi{N+1} = \bigcup_\gamma \bmi{N+1}\times \{\gamma_{-N}\}\times \dotsm \{\gamma_{-(M+1)}\},
\end{equation}
where the union runs over all B-products of $\bmi{N+1}$ and the $(N-M)$-tuples $\gamma$. Then we define
\begin{equation}\label{3-41}
P_{\ami{M+1}\bmi{N+1}}(t) = \sum_\gamma P_{\ami{M+1}\bmi{N+1}\times \{\gamma_{-N}\}\times \dotsm \times \{\gamma_{-(M+1)}\}}(t).
\end{equation}
By prop. \ref{p3-2} the function \eqref{3-41} does not depend on $M$ provided $M\leq N$. Since $\alpha = \cap_{M\leq N} \ami{M+1}$ we set
\begin{equation}
\label{3-42}
P(\alpha,\bmi{N+1},t) = P_{\ami{M+1}\bmi{N+1}}(t)
\end{equation}

We shall show in the next section that \eqref{3-42} defines the transition function for a stochastic process on $\SB$.
\section{Markovian Semigroup and its Generator}
In this section we shall define a Borel measure on $\SB$ and show that\\ $P(\alpha,\{\beta\}_{-(N+1)},t)$ discussed in the previous section defines a Markovian semigroup. We recall that $\K(M)$  defined by \eqref{2-18} is the family of all disjoint balls of radius $q^M$. Then
$$\K:=\bigcup_{M\in \Z}\K(M)$$ is the family of all balls in $\SB$. We define a set function $\mu$ on $\K$ as follows
\begin{equation}
\label{4-1}
\mu(\{0\}_{-1})=1,
\end{equation}
and
\begin{equation}\label{4-2}
\mu(\{\alpha\}_{-(M+1)})=\Bmi{}{M+1}\mu(\{\alpha\}_{-M})
\end{equation} 
for all $\alpha \in \SB$ and $M \in \Z$.\\ 
It follows from \eqref{4-2} that the numbers $\mu(\{\alpha\}_{{-(M+1)}\times \{\gamma\}})$, $0\leq \gamma \leq \Bmi{}{M+1}-1$ are equal. By standard arguments \cite{B} $\mu$ can be extended to a Borel measure on $\SB$. 
Similarly for any $\alpha \in \SB$ and $t>0$, $P(\alpha,\bmi{k+1},t)$ defines a set function on $\K$ and can be extended to a Borel measure on $\SB$. 
Given a ball $\{\beta\}_{-(k+1)}$ and $t>0$ then $P(\alpha,\{\beta\}_{-(k+1)},t)$ is a function of $\alpha \in \SB$ and by prop. \ref{p3-2} and \eqref{3-42} it is constant on every ball $\{\alpha\}_{-(k+1)}$. It follows that for any Borel set $A \subset{\SB}$, $P(\alpha,A,t)$ is a $\mu$-measurable function. Thus $P(\alpha,A,t)$ is a family of positive integral kernels.
For a real valued Borel function $u$ on ${\SB}$ put $p_{t}u(\eta)=\int_{\SB}P(\eta,d\xi,t)u(\xi)$ whenever the integral makes sense.
\begin{prop}\label{p4-1}
$P(\eta,A,t)$ has following properties 
\begin{enumerate}
\renewcommand{\theenumi}{(\roman{enumi})}
\item The integral kernel $P(\eta,A,t)$ is $\mu$-symmetric in the sense of Fukushima \cite{F},\cite{FOT} i.e. for any pair of nonnegative Borel functions $u, v$
\begin{multline}\label{4-3}
\int_{\SB}u(\eta)\left(\int_{\SB}v(\xi)P(\eta,d\xi,t)\right)\mu(d\eta)\\
=\int_{\SB}\left(\int_{\SB} u(\eta)P(\xi,d\eta,t)\right)v(\xi)\mu(d\xi)\leq \infty.
\end{multline}
\item $P(\eta,\SB,t)=1$ for all $\eta \in {\SB}$ and $t>0$.
\item $p_{t}(p_{s}u(\eta))=p_{t+s}u(\eta)$ for $t,s>0$ and $u$ any bounded  Borel function.
\end{enumerate} 
\end{prop}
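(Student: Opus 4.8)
The plan is to reduce each of (i)--(iii) to an assertion about the discrete transition matrix $P^{M,M}_{ij}(t)$ and the ball weights $\mu(K^M_i)$, and then to read it off from the explicit solutions \eqref{3-36}, \eqref{3-37} together with \eqref{4-2}. Two structural facts drive everything. First, the balls $\{K^M_i\}_{i\in\N}$ form a countable measurable partition of $\SB$ generating its Borel $\sigma$-algebra, and step functions constant on them are dense; hence it suffices to test (i)--(iii) on indicators $u=\mathbf{1}_{K^M_i}$, $v=\mathbf{1}_{K^M_j}$, the general case following by a monotone-class/approximation argument together with monotone convergence, $P(\eta,\cdot,t)$ being already a Borel measure. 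Second, by the preceding Lemma the total exit rate $\tilde a_j(M)=a(M)$ is a finite constant; thus the array $Q$ with $q_{jl}=\tilde u_{jl}$ ($l\ne j$) and $q_{jj}=-a(M)$ has uniformly bounded entries and vanishing row sums \eqref{3-5}, so it is a bounded operator on $\ell^\infty(\N)$ with $\|Q\|\le 2a(M)$. Consequently \eqref{KC} with \eqref{3-4} has the unique solution $P^{M,M}(t)=\exp(tQ)$, which is a genuine stochastic matrix, in particular $0\le P^{M,M}_{ij}(t)$. This boundedness is exactly what legitimises every interchange of the countable sums with $d/dt$ and with the limits below.

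Property (ii) is conservativeness. For $\eta\in K^M_{i}$, countable additivity of $P(\eta,\cdot,t)$ and \eqref{2-20} give $P(\eta,\SB,t)=\sum_{j}P^{M,M}_{ij}(t)$, so it suffices to show the row sums equal $1$. Summing \eqref{KCa} over $j$ and interchanging the absolutely convergent sums, the gain term becomes $\sum_l P^{M,M}_{il}\sum_{j\ne l}\tilde u_{lj}=\sum_l\tilde a_l P^{M,M}_{il}$ by \eqref{3-5}, which cancels the loss term $\sum_j\tilde a_j P^{M,M}_{ij}$; hence $\tfrac{d}{dt}\sum_j P^{M,M}_{ij}(t)=0$, and by \eqref{3-4} the constant value is $1$. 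The same fact is visible explicitly in \eqref{3-35}: writing $c_{-1}=1$ and $c_n=(\Bmi{i}{M+k+n+2}\dotsm\Bmi{i}{M+k+2})^{-1}$, the $n$-th coefficient equals $c_{n-1}-c_n$, so the coefficients sum to $1-\lim_n c_n=1$ (each $\B{N}\ge2$), while as $k\to\infty$ one has $a(M+k+n)+u(\alpha^i,1,M+k+n)\le 2a(M+k)\to0$ uniformly in $n$; since $K(\eta,q^{M+k})\uparrow\SB$ this recovers $P(\eta,\SB,t)=\lim_k P^{M,M+k}_{ii}(t)=1$.

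Property (i) is detailed balance. Fix $i\ne j$ with $\rho(\alpha^i,\alpha^j)=q^{M+k}$ and let $C=\{\alpha^i\}_{-(M+k+1)}=\{\alpha^j\}_{-(M+k+1)}$ be the smallest ball containing $K^M_i$ and $K^M_j$. Iterating \eqref{4-2} from $C$ down to $K^M_i$ and comparing with \eqref{3-9} gives
\[\mu(K^M_i)=\mu(C)\,(\Bmi{i}{M+k+1}-1)\,\Bmi{i}{M+k+1}^{-1}\,B^{-1}(\alpha^i,k,M),\]
and the identical formula with $i$ replaced by $j$ (note $\Bmi{i}{M+k+1}=\Bmi{j}{M+k+1}$, both being $B$ evaluated at $C$). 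In \eqref{3-37} the bracketed factor involves $B$ and $u(\cdot,1,\cdot)$ only at levels $\ge M+k+1$, i.e. on $C$ and coarser balls, where $\alpha^i$ and $\alpha^j$ agree; hence it is invariant under $i\leftrightarrow j$. Multiplying these weights by \eqref{3-37} shows that $\mu(K^M_i)P^{M,M}_{ij}(t)$ and $\mu(K^M_j)P^{M,M}_{ji}(t)$ collapse to one and the same $i\leftrightarrow j$ symmetric expression (for $i=j$ the identity is trivial). This is exactly \eqref{4-3} for the chosen indicators, and monotone convergence extends it to all nonnegative Borel $u,v$. Equivalently, one verifies the rate-level identity $\mu(K^M_i)\tilde u_{ij}=\mu(K^M_j)\tilde u_{ji}$, i.e. reversibility.

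Property (iii) is the semigroup law, and with $Q$ bounded it follows from $\exp((t+s)Q)=\exp(tQ)\exp(sQ)$, that is $P^{M,M}_{ij}(t+s)=\sum_l P^{M,M}_{il}(t)P^{M,M}_{lj}(s)$; alternatively, for fixed $s$ both sides, as functions of $t$, solve the Kolmogorov system with the common datum $P^{M,M}_{ij}(s)$ at $t=0$ and hence agree by the uniqueness above. Because $p_t\mathbf{1}_{K^M_j}(\eta)=P^{M,M}_{ij}(t)$ is constant on each $K^M_i$, applying $p_s$ reproduces the convolution $\sum_l P^{M,M}_{il}(s)P^{M,M}_{lj}(t)=P^{M,M}_{ij}(s+t)=p_{s+t}\mathbf{1}_{K^M_j}(\eta)$; thus $p_sp_t=p_{s+t}$ on ball indicators, and, using the $L^\infty$ bound from (ii), on all bounded Borel $u$. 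I expect the main obstacle to be purely technical rather than conceptual: the rigorous justification of the interchanges of the infinite sums with differentiation and with the limits in $k$, which is precisely why the boundedness of $Q$ (equivalently the finiteness of $\tilde a_j(M)=a(M)$) must be secured first; once that is in hand, each of (i)--(iii) reduces to the elementary ball-level computations indicated above.
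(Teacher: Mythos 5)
Your proposal is correct and follows essentially the same route as the paper: reduction to ball indicators, the detailed-balance computation combining \eqref{4-2}, \eqref{3-9} and \eqref{3-37} (with the observation that the time-dependent bracket depends only on data at levels $\geq M+k+1$, hence is symmetric in $i,j$) for (i), the telescoping coefficient sum and the bound $P^{M,M+k}_{ii}(t)\geq e^{-2a(M+k)t}\to 1$ for (ii), and the Chapman--Kolmogorov semigroup identity for the matrix solutions for (iii). The only genuine addition is your explicit $\ell^\infty$ bounded-generator argument ($P^{M,M}(t)=\exp(tQ)$, $\|Q\|\leq 2a(M)$), which supplies the uniqueness and sum-interchange justifications that the paper leaves implicit.
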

\begin{proof} 
\begin{enumerate}
\renewcommand{\theenumi}{(\roman{enumi})}
\item It is sufficient to prove \eqref{4-3} for $u=\chi_{\{\alpha^i\}_{-(M+1)}}$, $v=\chi_{\{\alpha^j\}_{-(M+1)}}$. Then
$$\int_{\SB}\chi_{\{\alpha^j\}_{-(M+1)}}(\eta)P(\xi,d\eta,t)=P(\xi,\{\alpha^j\}_{-(M+1)},t).$$
If $\xi \in \{\alpha^i\}_{-(M+1)}$ then by \eqref{3-42}
$$P(\xi,\{\alpha^j\}_{-(M+1)},t)=P_{\{\alpha^i\}_{-(M+1)}\{\alpha^j\}_{-(M+1)}}(t)=P_{ij}(t).$$
Thus $$\int_{\SB}\chi_{\{\alpha^i\}_{-(M+1)}}(\xi)P(\xi,\{\alpha^j\}_{-(M+1)},t)\mu(d\xi)=\mu\left(\{\alpha^i\}_{-(M+1)}\right)P_{ij}(t).$$
Similarly
\begin{multline*}
\int_{\SB}\chi_{\{\alpha^j\}_{-(M+1)}}(\xi)\left(\int_{\SB}\chi_{\{\alpha^i\}}(\eta)P(\xi,d\eta,t)\right)\mu(d\xi)\\
=\mu(\{\alpha^j\}_{-(M+1)})P_{ji}(t).
\end{multline*}
If $\rho(\alpha^j,\alpha^i)=q^{M+k}$, $k\geq 1$, then according to \eqref{4-2} we have
$$\mu(\{\alpha^j\}_{-(M+k+1)})=\Bmi{j}{M+k+1}\dotsm \Bmi{j}{M+2}\mu(\{\alpha^j\}_{-(M+1)}),$$
and by \eqref{3-9} and the fact that $\{\alpha^j\}_{-(M+k+1)}=\{\alpha^i\}_{-(M+k+1)}$ we get
$$\mu(\{\alpha^j\}_{-(M+1)})=B^{-1}(\alpha^j,k,M)\dfrac{\Bmi{i}{M+k+1}-1}{\Bmi{i}{M+k+1}}\mu(\{\alpha^j\}_{-(M+k+1)}).$$
Similarly
$$\mu(\{\alpha^i\}_{-(M+1)})=B^{-1}(\alpha^i,k,M)\dfrac{\Bmi{i}{M+k+1}-1}{\Bmi{i}{M+k+1}}\mu(\{\alpha^j\}_{-(M+k+1)}).$$
Finally using \eqref{3-37} we obtain
\begin{multline*}
\mu\left(\{\alpha^i\}_{-(M+1)}\right)P_{ij}(t)=\\
B^{-1}(\alpha^i,k,M)B^{-1}(\alpha^j,k,M)\left(\dfrac{\Bmi{i}{M+k+1}-1}{\Bmi{i}{M+k+1}}\right)^2A_i(t)\mu(\{\alpha^j\}_{-(M+k+1)}),
\end{multline*}
\begin{multline*}
\mu\left(\{\alpha^j\}_{-(M+1)}\right)P_{ji}(t) \\ =B^{-1}(\alpha^j,k,M)B^{-1}(\alpha^i,k,M)\left(\dfrac{\Bmi{i}{M+k+1}-1}{\Bmi{i}{M+k+1}}\right)^2A_j(t)\mu(\alpha^j\}_{-(M+k+1)}),
\end{multline*}
where $A_i(t)$ resp. $A_j(t)$ is the time dependent factor in the square bracket in \eqref{3-37}. We conclude the proof observing that $A_i(t)=A_j(t)$ and thus
$$\mu(\{\alpha^i\}_{-(M+1)})P_{ij}(t)=\mu(\{\alpha^j\}_{-(M+1)})P_{ji}(t).$$

\item We have \mbox{$P(\eta,\SB,t)=\lim \limits_{k\to \infty}P_{ii}^{M,M+k}(t)$} where $\eta \in K_{i}^{M}$. 
Taking into account that
$$u(\alpha^j,1,N)=(\Bmi{j}{N+2}-1)^{-1}(a(N)-a(N+1))\leq a(N)-a(N+1)$$
and using \eqref{3-35} we find
\begin{multline*}
P_{ii}^{M,M+k}(0)=\sum_{n=0}^{\infty}(\Bmi{i}{M+k+n+2}\dotsm \Bmi{i}{M+k+2})^{-1}(\Bmi{i}{M+k+n+2}-1)\\
=1 \geq P_{ii}^{M,M+k}(t)\geq \exp\{-(2a(M+k)-a(M+k+1))t\}\underset{k\to \infty}{\to}1.
\end{multline*}
\item It is sufficient to prove the statement for $u=\chi_{\{{\alpha}^i\}_{-(M+1)}}.$ The functions ${P_{ij}}^{M,M}(t)$
defined by \eqref{3-36},\eqref{3-37} solve the Chapman-Kolmogorov equations \eqref{KC} and hence posses the semigroup property
\begin{equation}\label{4-4}
\sum_{l=1}^{\infty}P_{il}^{M,M}(s)P_{lj}^{M,M}(t)=P_{ij}^{M,M}(s+t).
\end{equation}
In the notation of Section 3 \eqref{4-4} reads
\begin{equation}\label{4-5}
\int_{\SB}P_{s}(\xi,d\eta)P_{t}(\eta,K_{j}^{M})={\sum}_{l=1}^{\infty} \int_{K_{l}^{M}}P_{s}(\xi,d\eta)P(\eta,K_{j}^{M})=P_{s+t}(\xi,K_{j}^{M}).
\end{equation} 
Let $M\leq N\in{\mathbf{Z}}$. Then by Propositions \ref{p3-2} and \ref{p3-3} $P_{t}(\xi,K_{j}^N)$ can be defined equivalently either by \eqref{3-42} with the right hand side given by \eqref{3-41} or by
\begin{equation}\label{4-6}
P_{t}(\xi,K_{j}^{N})=P_{ij}^{N}(t), \xi\in K_{i}^{N},
\end{equation}
where $P_{ij}^{N}(t)$ is a solution of the Chapman-Kolmogorov equation over the state space $\mathcal{K}^{N}.$
Consequently \eqref{4-5} holds for any $M\in{\Z}.$
\end{enumerate}
\end{proof}
As an immediate consequence of proposition \ref{p4-1} $p_t$ extends uniquely to a self adjoint Markovian semigroup $T_t,t>0$ acting in $L^2(\SB,\mu)$.
\begin{prop}\label{p4-2}
The Markovian semigroup $T_t,t>0$ acting in $L^2(\SB,\mu)$ defined by $p_t,t>0$ is strongly continuous.\\
\end{prop}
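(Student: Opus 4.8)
The plan is to use the standard reduction to a dense subspace. Since $T_t$ is a symmetric Markovian semigroup it is a contraction on $L^2(\SB,\mu)$, so $\sup_{0<t\le1}\|T_t\|\le1$; by the usual three-epsilon argument this uniform bound reduces strong continuity to showing $\|T_tu-u\|_{L^2}\to0$ as $t\to0^+$ for $u$ in a dense set. I would take that set to be the linear span of the ball indicators $\chi_{K_i^M}$. They are dense: the balls are clopen by Proposition \ref{p2-9}, the partitions $\{K_i^M\}_{i\in\N}$ of \eqref{2-20} refine as $M$ decreases and generate the Borel $\sigma$-algebra (the balls shrink to points as $M\to-\infty$), so the conditional expectations onto $\sigma(\{K_i^M\}_i)$ converge to the identity in $L^2$.

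With $u=\chi_{K_i^M}$ fixed, the first step is to write $T_tu$ explicitly. By the definition of $p_t$ and \eqref{3-42}, $T_tu(\eta)=P(\eta,K_i^M,t)$, and by Proposition \ref{p3-2} this is constant on each ball $K_j^M$ with value $P_{ji}^{M,M}(t)$. Since \eqref{2-20} exhibits $\SB$ as the disjoint union $\bigcup_j K_j^M$,
\begin{equation}\label{sc-1}
\|T_tu-u\|_{L^2}^2=\mu(K_i^M)\bigl(P_{ii}^{M,M}(t)-1\bigr)^2+\sum_{j\neq i}\mu(K_j^M)\,P_{ji}^{M,M}(t)^2 ,
\end{equation}
so the whole matter reduces to sending both terms to $0$ as $t\to0^+$.

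For the diagonal term I would invoke the explicit formula \eqref{3-36} together with the squeeze already recorded in the proof of Proposition \ref{p4-1}(ii),
$$\exp\{-(2a(M)-a(M+1))t\}\le P_{ii}^{M,M}(t)\le1,$$
which forces $P_{ii}^{M,M}(t)\to1$ and kills the first summand. For the off-diagonal tail I would avoid a term-by-term limit and instead use the $\mu$-symmetry of Proposition \ref{p4-1}(i), $\mu(K_j^M)P_{ji}^{M,M}(t)=\mu(K_i^M)P_{ij}^{M,M}(t)$, together with $P_{ji}^{M,M}(t)\le1$, to obtain
\begin{equation}\label{sc-2}
\sum_{j\neq i}\mu(K_j^M)\,P_{ji}^{M,M}(t)^2\le\mu(K_i^M)\sum_{j\neq i}P_{ij}^{M,M}(t)=\mu(K_i^M)\bigl(1-P_{ii}^{M,M}(t)\bigr),
\end{equation}
the last equality being conservativeness $\sum_j P_{ij}^{M,M}(t)=1$, which follows from Proposition \ref{p4-1}(ii) and the countable additivity of $P(\eta,\cdot,t)$ over \eqref{2-20}. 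The right side of \eqref{sc-2} also tends to $0$, so \eqref{sc-1} gives $\|T_tu-u\|_{L^2}\to0$ for every ball indicator, and the density reduction finishes the proof.

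The one real obstacle is the infinite off-diagonal sum in \eqref{sc-1}: handled naively it would need a dominated-convergence bound uniform in $j$. The symmetry identity of Proposition \ref{p4-1}(i) is exactly what removes it, collapsing $\sum_{j\neq i}\mu(K_j^M)P_{ji}^{M,M}(t)^2$ into the single scalar $\mu(K_i^M)\bigl(1-P_{ii}^{M,M}(t)\bigr)$; after that the argument depends only on the continuity at $t=0$ of the one function $P_{ii}^{M,M}(\cdot)$.
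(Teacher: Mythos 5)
Your proof is correct, but it takes a genuinely different route from the paper's. The paper also reduces, via contractivity, to $f$ in the span of ball indicators, but then it never confronts the infinite off-diagonal sum at all: using self-adjointness and the semigroup law it writes $\Vert T_tf-f\Vert^2=(f,T_{2t}f)-2(T_tf,f)+(f,f)$, which for $f=\sum_{i=1}^n f_i\chi_{K_i}$ is a \emph{finite} combination of the matrix entries $\mu(K_i)P_{ij}(2t)$ and $\mu(K_j)P_{ji}(t)$, and then lets $t\downarrow 0$ using $P_{ij}(0)=\delta_{ij}$. You instead expand $\Vert T_t\chi_{K_i}-\chi_{K_i}\Vert^2$ directly over the partition \eqref{2-20} and tame the resulting infinite tail $\sum_{j\neq i}\mu(K_j)P_{ji}(t)^2$ by combining the $\mu$-symmetry $\mu(K_j)P_{ji}=\mu(K_i)P_{ij}$, the bound $P_{ji}\leq 1$, and conservativeness $\sum_jP_{ij}(t)=1$, collapsing it to $\mu(K_i)(1-P_{ii}^{M,M}(t))$. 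The trade-off: the paper's polarization trick is shorter and needs only self-adjointness plus continuity of finitely many $P_{ij}(\cdot)$ at $0$, whereas your argument leans on both parts (i) and (ii) of Proposition \ref{p4-1} but is more quantitative, yielding the explicit estimate $\Vert T_t\chi_{K_i}-\chi_{K_i}\Vert^2\leq\mu(K_i)\left[(1-P_{ii}^{M,M}(t))^2+(1-P_{ii}^{M,M}(t))\right]$ and needing continuity at $0$ of the single diagonal entry only. Both inputs you invoke are available at this point in the paper, and your density claim for the span of ball indicators is sound (the paper uses it implicitly), so the argument goes through.
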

\begin{proof} Due to the contractivity of $T_t$ it will be sufficient to show that \mbox{$\lim \limits_{t \downarrow 0}\Vert T_tf-f\Vert=0$}
 for $f$ of the form $f=\sum_{i=1}^{n}f_i\chi_{K_i}$. But using the self-adjointness of $T_t$ and the initial condition $P_{ij}(0)=\delta_{ij}$ we obtain indeed
\begin{multline*}
\Vert T_tf-f\Vert^2-(\sum_i f_i \chi_{K_i},T_{2t} \sum_j f_j \chi_{K_j})-2(T_{t}\sum_i f_i \chi_{K_i},\sum_j f_j \chi_{K_j})\\
 +(\sum_i f_i \chi_{K_i},\sum_j f_j \chi_{K_j})=\sum_{i,j} f_i f_j (\mu(K_i) P_{ij}(2t)-2 \mu(K_j) P_{ji}(t)\\
 +\delta_{ij} \mu(\chi_{K_j}))\underset{t\downarrow 0}{\to} \sum_i f^2_i( \mu(K_i)-2\mu(K_i)+\mu(K_i))=0.
\end{multline*}\end{proof}

To summarize $T_t,t>0$ is a strongly continuous self adjoint contraction semigroup acting in $L^{2}({\SB},\mu)$ and hence it has the representation
$$T_t=e^{-Ht},t\geq0$$ where $H$ is a non-negative self-adjoint operator acting in $L^{2}({\SB},\mu)$. Let $f\in L^2(\SB,\mu).$ Then by definition
$$(Hf)(\eta)=\lim_{t\downarrow 0} t^{-1}\left[f(\eta)-\left(T_tf\right)(\eta)\right]=\lim_{t \downarrow 0}t^{-1}\left[f(\eta)-\int_{\SB}f(\xi)P(\eta,d\xi,t)\right]$$
whenever the strong limit exists. To find the explicit formula for $H$ we proceed as follows. Take $f=\chi_{\{\alpha\}_{-(M+1)}}$. If $\eta\in \{\alpha\}_{-(M+1)}$ then
$$H\chimi{M+1}(\eta)=\lim_{t \downarrow 0}t^{-1}\left[1-P(\eta,\{\alpha\}_{-(M+1)},t)\right]=a(M).$$
Indeed
\begin{multline*}
\frac{1}{t}\left[1-P(\eta,\{\alpha\}_{-(M+1)})\right]=\\
\sum_{n=0}^{\infty}\left(\Bmi{}{M+n+2} \dotsm \Bmi{}{M+2}\right)^{-1}\left(\Bmi{}{M+n+2}-1\right)\\
\times\left(1-\exp\left\{-\left(a(M+n)+u(\alpha^i,1,M+n)\right)t\right\}\right)\frac{1}{t}\\
\underset{t\to 0}{\to} \sum_{n=0}^{\infty}\left(\Bmi{}{M+n+2} \dotsm \Bmi{}{M+2}\right)^{-1}\left(\Bmi{}{M+n+2}-1\right)\\
\times \left[ a(M+n)+\left(\Bmi{}{M+n+2}-1\right)^{-1}\left(a(M+n)-a(M+n+1)\right)\right]\\
=\sum_{n=0}^{\infty}\left(\Bmi{}{M+n+2} \dotsm \Bmi{}{M+2}\right)^{-1}\left(\Bmi{}{M+n+2}a(M+n)-a(M+n+1)\right)\\=a(M).
\end{multline*}
If $\rho(\eta,\alpha)=q^{M+k},k\in\N$ then we have by \eqref{3-39}
\begin{multline*}
H\chimi{M+1}(\eta)=-\lim_{t\downarrow 0}t^{-1}P(\eta,\{\alpha\}_{-(M+1)},t)\\
=B^{-1}(\alpha,k,M)\Bet{M+k+1}^{-1}\left(\Bet{M+k+1}-1\right)\\
\times [\sum_{n=0}^{\infty}\left(\Bet{M+k+n+2} \dotsm \Bet{M+k+2}\right)^{-1}\left(\Bet{M+k+n+2}-1\right) \\
\times\left(a(M+k+n)+u(\eta,1,M+k+n)\right)-\left(a(M+k-1)+u(\eta,1,M+k-1)\right)]\\
=-B^{-1}(\alpha,k,M)\left(a(M+k-1)-a(M+k)\right).
\end{multline*}
Thus we have proved the formula
\begin{equation}\label{4-7}
H\chimi{M+1}(\eta)=\left\{\begin{array}{l}
a(M)\ if\ \eta \in \{\alpha\}_{-(M+1)},\\
-B^{-1}(\alpha,k,M)\left(a(M+k-1)-a(M+k)\right)\\
if\ \rho(\eta,\alpha)=q^{M+k}.
\end{array}\right.
\end{equation} 
This formula is valid for all $\alpha \in \SB$ and $M\in\Z$. Note that by \eqref{4-2} 
$$
\mu (\{\alpha\}_{-(M+k+1)}\setminus \{\alpha\}_{-(M+k)})=B(\alpha,k,M)\mu (\{\alpha\}_{-(M+1)}).
$$ 
Thus 
\begin{multline}
\left\|H\chimi{M+1}\right\|^{2}\\
=\mu(\{\alpha\}_{-(M+1)})[a(M)^{2}+\sum_{k=1}^{\infty}B^{-1}(\alpha,k,M)[a(M+k-1)-a(M+k)]^{2}]
\end{multline}
is finite because $B^{-1}(\alpha,k,M)\longrightarrow 0$ and $a(N)\longrightarrow 0$ as $N\longrightarrow \infty.$
Consequently the characteristic functions of the balls in $\SB$ belong to the domain $D(H)$ of $H$. 
The spectral properties of $H$ are described by
\begin{thm}\label{p4-3}
Let $-H$ denote as above the generator of the strongly continuous semigroup $T_t$ with the kernel defined by \eqref{3-42}. \\Then
\begin{enumerate}
\renewcommand{\theenumi}{(\alph{enumi})}
\item\label{p4-4a} For any $M\in\Z$ such that $a(M)>0$ and $\alpha\in\SB$ there corresponds an eigenvalue $h_{M,\alpha}$ of $H$ given by
\begin{equation}\label{4-8}
h_{M,\alpha}=\left(\Bmi{}{M+2}-1\right)^{-1}\left(\Bmi{}{M+2}a(M)-a(M+1)\right)
\end{equation}
and a $\Bmi{}{M+2}-1$ dimensional eigenspace spanned by vectors of the form
\begin{equation}\label{4-9}
e_{M,\alpha}=\sum_{\gamma=0}^s b_{\gamma} {\chi}_{\{\alpha\}_{-(M+2)}\times\{\gamma\}},
\end{equation}
where
\begin{equation}\label{4-10}
\sum_{\gamma=0}^sb_{\gamma}=0\ and\ s=\Bmi{}{M+2}-1.
\end{equation}
If $a(M)=0$ then $\chimi{M+1}$ is an eigenvector of $H$ to the eigenvalue 0.
\item\label{p4-4b} The linear hull spanned by the vectors $e_{M,\alpha}$, $M\in\Z$, $\alpha\in\SB$ is dense in $L^2(\SB,\mu)$.
\end{enumerate}
\end{thm}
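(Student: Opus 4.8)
The plan is to deduce part \ref{p4-4a} directly from the explicit action \eqref{4-7} of $H$ on ball indicators, and to obtain part \ref{p4-4b} from a multiresolution (wavelet) decomposition of $L^2(\SB,\mu)$ adapted to the ball structure.

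For part \ref{p4-4a}, first I would note that each summand $\chi_{\{\alpha\}_{-(M+2)}\times\{\gamma\}}$ in \eqref{4-9} is the indicator of a ball of radius $q^M$, and that the $\Bmi{}{M+2}$ balls $\{\alpha\}_{-(M+2)}\times\{\gamma\}$, $0\le\gamma\le s$, are exactly the sub-balls of the radius-$q^{M+1}$ ball $\{\alpha\}_{-(M+2)}$ by \eqref{2-16}. Since $e_{M,\alpha}$ is a finite combination of ball indicators it lies in $D(H)$ by the estimate preceding the theorem, so I may apply \eqref{4-7} term by term and evaluate $He_{M,\alpha}(\eta)$ on the two regions determined by $\eta$. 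If $\eta\notin\{\alpha\}_{-(M+2)}$ then $\eta$ differs from every center $\beta^\gamma$ of $\{\alpha\}_{-(M+2)}\times\{\gamma\}$ first at an index $\le-(M+2)$, so $\rho(\eta,\beta^\gamma)=q^{M+k}$ with one and the same $k\ge 2$ for all $\gamma$; moreover by \eqref{3-9} the factor $B(\beta^\gamma,k,M)$ depends only on the branching numbers at nodes $\le-(M+2)$, hence is independent of $\gamma$. The second line of \eqref{4-7} thus contributes a common constant times $\sum_\gamma b_\gamma$, which vanishes by \eqref{4-10}. If instead $\eta\in\{\alpha\}_{-(M+2)}$, then $\eta$ lies in exactly one sub-ball, say the one with index $\gamma_0$, where the first line of \eqref{4-7} contributes $a(M)$; for every $\gamma\neq\gamma_0$ one has $\rho(\eta,\beta^\gamma)=q^{M+1}$ and $B(\beta^\gamma,1,M)=\Bmi{}{M+2}-1$ by \eqref{3-9}, so the second line contributes $-(\Bmi{}{M+2}-1)^{-1}(a(M)-a(M+1))$, again independent of $\gamma$. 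Collecting terms and using $\sum_{\gamma\neq\gamma_0}b_\gamma=-b_{\gamma_0}$ gives $He_{M,\alpha}(\eta)=b_{\gamma_0}\big[a(M)+(\Bmi{}{M+2}-1)^{-1}(a(M)-a(M+1))\big]=h_{M,\alpha}\,e_{M,\alpha}(\eta)$, which is exactly \eqref{4-8}. The dimension count follows because \eqref{4-10} is a single linear constraint on the $\Bmi{}{M+2}$ linearly independent indicators, leaving a $(\Bmi{}{M+2}-1)$-dimensional eigenspace. Finally, when $a(M)=0$ monotonicity and $a(N)\to0$ force $a(M+k)=0$ for all $k\ge0$, so the second line of \eqref{4-7} vanishes identically and $\chimi{M+1}$ becomes an eigenvector with eigenvalue $0$.

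For part \ref{p4-4b} I would set, for each $M\in\Z$, $V_M\subset L^2(\SB,\mu)$ to be the closed subspace of functions constant on every ball of radius $q^M$. Because each radius-$q^{M+1}$ ball is a disjoint union of radius-$q^M$ balls we have $V_{M+1}\subset V_M$, and I would let $W_M:=V_M\ominus V_{M+1}$. Using that all sub-balls of a given ball have equal $\mu$-measure (a consequence of \eqref{4-2}), the orthogonality condition $f\perp V_{M+1}$ for $f\in V_M$ is equivalent to the vanishing of the sum of the coefficients of $f$ over the sub-balls of each radius-$q^{M+1}$ ball; comparing with \eqref{4-9}--\eqref{4-10} this identifies $W_M=\overline{\operatorname{span}}\{e_{M,\alpha}:\alpha\in\SB\}$. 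It then remains to prove the two multiresolution endpoints. First, $\overline{\bigcup_M V_M}=L^2(\SB,\mu)$, which holds because ball-indicators generate the Borel $\sigma$-algebra and $L^2$ step functions at finer and finer scales are dense. Second, $\bigcap_M V_M=\{0\}$: a function constant on balls of radius $q^M$ for every $M$ is globally constant, since any two points of $\SB$ lie in a common ball of radius $q^M$ once $q^M\geq\rho(\cdot,\cdot)$; as $\mu(\SB)=\infty$ (the measures $\mu(K(0,q^N))=\mu(\{0\}_{-(N+1)})$ grow without bound by \eqref{4-2}, each factor being $\ge2$), the only constant in $L^2$ is $0$. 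Equivalently, if $g$ is orthogonal to every $e_{M,\alpha}$ then $g\perp W_M$ for all $M$, so the orthogonal projection $P_Mg$ onto $V_M$ is independent of $M$ and lies in $\bigcap_M V_M=\{0\}$; since $P_Mg\to g$ as $M\to-\infty$ this gives $g=0$, proving density.

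The term-by-term bookkeeping in part \ref{p4-4a} is routine once one tracks that the relevant $B$-factors are $\gamma$-independent; the step I expect to require the most care is the identification $W_M=\overline{\operatorname{span}}\{e_{M,\alpha}\}$ together with the triviality $\bigcap_M V_M=\{0\}$, since this is exactly where the non-compactness of $\SB$ and the infiniteness of $\mu$ enter and guarantee that the mean-zero vectors $e_{M,\alpha}$ alone, with no surviving coarse-scale approximation space, already exhaust $L^2(\SB,\mu)$.
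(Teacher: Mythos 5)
Your proposal is correct. Part \ref{p4-4a} is essentially the paper's own argument: both reduce to the pointwise evaluation of $H e_{M,\alpha}$ via \eqref{4-7}, using that off $\{\alpha\}_{-(M+2)}$ the $B$-factors are the same for all $\gamma$ so the contribution is a common constant times $\sum_\gamma b_\gamma=0$, and on $\{\alpha\}_{-(M+2)}$ the diagonal term $a(M)$ combines with the off-diagonal term $-(\Bmi{}{M+2}-1)^{-1}(a(M)-a(M+1))$ to give \eqref{4-8}; the paper merely packages this as the homogeneous linear system \eqref{4-13} rather than as your direct verification. Part \ref{p4-4b} is where you genuinely diverge. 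The paper fixes $\beta$, $N$, builds an explicit orthogonal family $e^k_{M,\alpha}$ ($k=0,\dots,\Bmi{}{M+2}-2$) along the tower of balls above $\alpha$, computes $\Vert e^k_{M,\alpha}\Vert^2$ and $(\chimi{N+1},e^k_{M,\alpha})$ from \eqref{4-2}, and verifies Parseval's identity by a telescoping product, concluding that each ball indicator lies in the closed span. You instead run a multiresolution (martingale) argument: $V_M$ the functions constant on radius-$q^M$ balls, $W_M=V_M\ominus V_{M+1}=\overline{\operatorname{span}}\{e_{M,\alpha}\}$ (using equality of the measures of sibling balls), $\overline{\bigcup_M V_M}=L^2$ and $\bigcap_M V_M=\{0\}$ because $\mu(\SB)=\infty$. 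Both are sound; your route is cleaner and isolates exactly where the infiniteness of $\mu$ enters (no surviving coarse-scale component), while the paper's computation has the side benefit of producing the explicit orthonormal eigenbasis $e^k_{M,\alpha}\Vert e^k_{M,\alpha}\Vert^{-1}$ that is immediately reused to prove Corollary \ref{p4-4} ($D_0$ an operator core); if you adopt your version you should note that an orthonormal eigenbasis inside $D_0$ still exists, since each $W_M$ splits as an orthogonal sum of the finite-dimensional spaces attached to the parent balls.
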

\begin{proof}The proof of  \ref{p4-4a} follows the proof of an analogous statement in \cite{AKa} with minor changes:\\ 

 Let $a(M)>0$ and put \mbox{$e_{M,\alpha}=\sum_{\gamma=0}^s b_{\gamma}{\chi}_{\{\alpha\}_{-(M+2)}\times\{\gamma\}}$}
 where $b_{\gamma}\in\R$ and $s=\Bmi{}{M+2}-1$.
Then $\supp e_{M,\alpha} \subset \{\alpha\}_{-(M+2)}$.\\
Let $\rho(\xi,\alpha)=q^{M+k+1},\ k\in\N$. Then by \eqref{4-7}

\begin{equation}\label{4-11}
He_{M,\alpha}(\xi)=-B^{-1}(\alpha,k+1,M)(a(M+k)-a(M+k+1))\sum_{\gamma=0}^{s}b_{\gamma}.
\end{equation}
This is zero if either $a(M+1)=0$ and then all $a(M+k)=0$ or\\
 $\sum^s_{\gamma=0}b_{\gamma}=0$. In either case we have $\supp\left(He_{M,\alpha}\right)\subset\{\alpha\}_{-(M+2)}$. The condition
\begin{equation}\label{4-12}
He_{M,\alpha}=h_{M,\alpha}e_{M,\alpha}
\end{equation}
is equivalent to the system of algebraic equations
\begin{equation}\label{4-13}
\sum_{\gamma=0}^s a_{\beta \gamma}b_{\gamma}=0,\ \beta=0,\dotsc ,s
\end{equation}
where $a_{\beta\beta}=a(M)-h_{M,\alpha}\ ,\ \beta=0,\dotsc ,s$\\ and $a_{\beta\gamma}=-B^{-1}(\alpha,1,M)\left(a(M)-a(M+1)\right)$.
If $a(M+1)=0$ then \eqref{4-13} is solved by
$b_0=b_1=\dotsc =b_s \text{ with } h_{M,\alpha}=0$
or by $b_{\gamma}$ satisfying
\begin{equation}\label{4-14}
\sum_{\gamma}^s b_{\gamma}=0
\end{equation}
with
\begin{equation}\label{4-15}
h_{M,\alpha}=\left(\Bmi{}{M+2}-1\right)^{-1}\left(\Bmi{}{M+2}a(M)-a(M+1)\right).
\end{equation}
If $a(M+1)>0$ then \eqref{4-11} vanish only if \eqref{4-14} holds. In this case \eqref{4-13} is solved with $h_{M,\alpha}$ given by \eqref{4-15}. 
If $a(M)=0$ then $a(M+k)=0$ for all $k\in N$ and by \eqref{4-7} $\chimi{M+1}$ is an eigenvector of $H$ to the eigenvalue 0. This proves part \ref{p4-4a}.\\
To prove (b) it is sufficient to show that for any $\beta\in\SB$ and $\N\in\Z$ the function $\chi_{\{\beta\}_{-(N+1)}}$ can be approximated by the vectors $e_{M,\alpha}$. According to \ref{p4-4a} any pair $\alpha\in\SB,\ M\in\Z$ defines an eigenvalue $h_{M,\alpha}$ and a corresponding $\Bmi{}{M+2}-1$ dimensional eigenspace.\\
For $k=0,1,\dotsc ,\Bmi{}{M+2}-2$ set
$$b^k_{\gamma,M}=\left\{\begin{array}{l}
1;\ \gamma=\alpha_{-(M+1)},\alpha_{-(M+1)}+1,\dotsc ,\alpha_{-(M+1)}+k\\
-(k+1);\ \gamma=k+1+\alpha_{-(M+1)}\\
0;\text{ otherwise}
\end{array}\right.$$
$\gamma$ is taken modulo $\Bmi{}{M+2}$. Then $\sum_{\gamma=0}^s b_{\gamma,M}^k=0$ and the vectors
$$e_{M,\alpha}^k =\sum_{\gamma=0}^s b_{\gamma,M}^k{\chi}_{\{\alpha\}_{-(M+2)}\times\{\gamma\}}$$
are pairwise orthogonal. We then have by \eqref{4-2}
\begin{gather*}
\left\Vert e_{M,\alpha}^k\right\Vert^2=\mu\left(\{\alpha\}_{-(M+2)}\right)\Bmi{}{m+2}^{-1}(k+1)(k+2)\\
\text{and } \left(\chimi{N+1}, e_{M,\alpha}^k\right)=\mu\left(\{\alpha\}_{-(N+1)}\right)\text{ for } M\geq N.
\end{gather*}
\begin{multline*}
\text{Thus }\sum_{M=N}^{\infty}\sum_{k=0}^{s_M}\left\vert\left(\chimi{N+1},e_{M,\alpha}^k\right)\left\Vert e_{M,\alpha}^k\right\Vert^{-1}\right\vert^2 \\
\shoveleft =\left(\mu\left(\{\alpha\}_{-(N+1)}\right)\right)^2\sum_{M=N}^{\infty}\left(\mu\left(\ami{M+2}\right)\right)^{-1}\Bmi{}{M+2}\sum_{k=0}^{s_M}\tfrac{1}{(k+1)(k+2)}\\
\shoveleft =\left(\mu\left(\ami{N+1}\right)\right)^2 \sum_{M=N}^{\infty}\left(\mu\left(\ami{M+2}\right)\right)^{-1}\left(\Bmi{}{M+2}-1\right)\\
\shoveleft =\mu\left(\ami{N+1}\right)\sum_{M=N}^{\infty}\left(\Bmi{}{M+2}\dotsm \Bmi{}{N+2}\right)^{-1}\left(\Bmi{}{M+2}-1\right)\\
\shoveleft =\mu\left(\ami{N+1}\right)=\left\Vert \chimi{N+1}\right\Vert^2.\\
\end{multline*}

\end{proof}
We have seen that $\chimi{M+1}\in D(H)$ for all $\alpha \in {\SB}$ and $M\in {\mathbf{Z}}.$ Put $D_0$ for the linear hull spanned by all $\chimi{M+1}.$ Then $D_{0}\subset D(H).$
Indeed, it follows from Theorem 4.3, b) that the eigenfunctions $e_{M,\alpha}^{k}\left\|e_{M,\alpha}^{k}\right\|^{-1}$ of $H$ form an orthonormal basis in $L^{2}(\SB,\mu).$ Since $e_{M,\alpha}^{k}$ are defined as linear combinations of the characteristic functions for the balls we have $e_{M,\alpha}^{k}\in D_{0}$ and consequently $D_{0}$ is dense in $D(H)$ in the graph norm. Thus we have proved
\begin{cor}\label{p4-4}
$D_0$ is an operator core for $H$ in $L^{2}({\SB},\mu).$
\end{cor}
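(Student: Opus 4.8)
The plan is to show that $D_0$ is dense in $D(H)$ with respect to the graph norm $\|f\|_H = (\|f\|^2 + \|Hf\|^2)^{1/2}$, which is precisely what it means for $D_0$ to be an operator core for the self-adjoint operator $H$. The only ingredient I would invoke beyond the spectral theorem is Theorem~\ref{p4-3}: after normalization the vectors $\psi_\nu := e_{M,\alpha}^k\|e_{M,\alpha}^k\|^{-1}$, where $\nu$ ranges over the triples $(M,\alpha,k)$ with $M\in\Z$, $\alpha\in\SB$ and $0\le k\le \Bmi{}{M+2}-2$, constitute a complete orthonormal system in $L^2(\SB,\mu)$ consisting of eigenvectors of $H$, each with eigenvalue $\lambda_\nu = h_{M,\alpha}$ as in \eqref{4-8}. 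Orthonormality and completeness are exactly part~\ref{p4-4b} together with the pairwise orthogonality noted in its proof, and I would simply quote them.

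First I would record that the finite linear hull $E$ of the family $\{\psi_\nu\}$ is itself a core for $H$. This is the standard fact that the finite linear span of a complete orthonormal system of eigenvectors of a self-adjoint operator is a core. Concretely, for $f\in D(H)$ expand $f=\sum_\nu c_\nu\psi_\nu$ with $c_\nu=(f,\psi_\nu)$; by the spectral theorem membership $f\in D(H)$ is equivalent to $\sum_\nu|\lambda_\nu|^2|c_\nu|^2<\infty$, and then $Hf=\sum_\nu\lambda_\nu c_\nu\psi_\nu$. Enumerating the countable index set and truncating to the first $n$ terms produces $f_n\in E$ with $\|f-f_n\|^2=\sum_{\nu>n}|c_\nu|^2\to 0$ and $\|H(f-f_n)\|^2=\sum_{\nu>n}|\lambda_\nu|^2|c_\nu|^2\to 0$, so $f_n\to f$ in the graph norm. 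Hence $E$ is dense in $D(H)$ in the graph norm.

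Next I would observe the inclusions $E\subseteq D_0\subseteq D(H)$. The right inclusion was already established immediately before the statement of the corollary, where every $\chimi{M+1}$ was shown to lie in $D(H)$. The left inclusion holds because, by \eqref{4-9}, each $e_{M,\alpha}^k$ is a finite linear combination of characteristic functions $\chi_{\{\alpha\}_{-(M+2)}\times\{\gamma\}}$ of balls, and these characteristic functions are by definition the generators of $D_0$; thus each $\psi_\nu\in D_0$ and therefore $E\subseteq D_0$. Since $E$ is dense in $D(H)$ in the graph norm and $E\subseteq D_0\subseteq D(H)$, the intermediate space $D_0$ is a fortiori dense in $D(H)$ in the graph norm, which is the assertion that $D_0$ is an operator core for $H$.

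The argument is essentially bookkeeping once Theorem~\ref{p4-3} is in hand; the only point requiring care is the spectral-theoretic characterization of $D(H)$ by $\sum_\nu|\lambda_\nu|^2|c_\nu|^2<\infty$ together with the attendant graph-norm convergence of the truncations, since this is exactly what upgrades $E$—and hence $D_0$—from an $L^2$-dense subspace to a genuine core. I would therefore regard the completeness of the eigenbasis (so that the expansion $f=\sum_\nu c_\nu\psi_\nu$ really exhausts $L^2(\SB,\mu)$ and no part of the spectrum is omitted) as the substantive step, already discharged in part~\ref{p4-4b}, and treat the remainder as routine.
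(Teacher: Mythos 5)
Your proof is correct and follows essentially the same route as the paper: both deduce the corollary from Theorem~\ref{p4-3}\ref{p4-4b} by observing that the normalized eigenvectors $e_{M,\alpha}^k$ lie in $D_0$ and form a complete orthonormal system, so that $D_0$ is graph-norm dense in $D(H)$. The only difference is that you spell out the standard truncation argument (characterizing $D(H)$ by $\sum_\nu|\lambda_\nu|^2|c_\nu|^2<\infty$) which the paper leaves implicit in the word ``consequently''.
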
  
Let as before $-H$ be the generator of the strongly continuous Markov semigroup $T_t,\ t>0$ constructed above. Then
$$\E(f,g)=\left(H^\frac{1}{2}f,H^\frac{1}{2}g\right)$$
defined for all $f,g\in D[\E]=D(H^\frac{1}{2})$ is according to \cite{F}, \cite{FOT} a closed, symmetric Markovian quadratic form i.e. a Dirichlet form in $L^{2}({\SB},\mu)$.\\
Since $D_0$ is a core for $H$ it is also a core for $H^\frac{1}{2}$ i.e.
\begin{enumerate}
\renewcommand{\theenumi}{(\alph{enumi})}
\item\label{alpha} $D_0$ is dense in $D[\E]$ in the norm $\left(\E_1 (\cdot , \cdot )\right)^\frac{1}{2}=\left[\E (\cdot , \cdot )+(\cdot , \cdot )\right]^\frac{1}{2}$.\\
\end{enumerate}
Put $C_0(\SB)$ for the space of real valued continuous functions of compact support on $\SB$. Then by the Weierstrass-Stone theorem
\begin{enumerate}
\renewcommand{\theenumi}{(\alph{enumi})}
\refstepcounter{enumi}
\item\label{beta}$D_0$ is dense in $C_0(\SB)$ in the uniform norm topology.
\end{enumerate}
In the Fukushima terminology a set $D_0\subset D[\E]\cap C_0(\SB)$ enjoying properties \ref{alpha} and \ref{beta} is called a core for $\E$ and a Dirichlet form which has such a core is called regular. The regular Dirichlet forms can be expressed uniquely in terms of the Beurling-Deny representation. Thus we have the representation
\begin{multline}\label{4-16}
\E(u,v)=\E^{(c)}(u,v)+\underset{\SB\times\SB\setminus d}{\int \int}(u(\eta)-u(\xi))(v(\eta)-v(\xi))J(d\eta,d\xi)\\
+\int_{\SB} u(\xi)v(\xi)k(d\xi)
\end{multline}
for $u,v\in D[\E]\cap C_0(\SB)$.\\
Here $\E^{(c)}$ is a symmetric form satisfying $\E^{(c)}(u,v)=0$ if $v$ is constant on a neighborhood of $\supp u$, $J(d\eta,d\xi)$ is a symmetric positive Borel measure on $\SB\times\SB$ off the diagonal $d$ and $k$ a positive Borel measure on $\SB$. It turns out however that in our case $\E^{(c)}$ and $k$ vanish identically. Indeed put $u=\chimi{M+1}$ and $v=\chimibet{M+1}.$ Then for all $\alpha  ,\beta \in \SB$ and $M\in\Z$ the function $\chimibet{M+1}$ is constant on $\ami{M+1}=\supp\chimi{M+1}$. By Proposition \ref{p2-6} $\ami{M+1}$ is open and thus a neighborhood of itself. Accordingly
$$\E^{(c)}\left(\chimi{M+1},\chimibet{M+1}\right)=0$$
and consequently $\E^{(c)}$ vanishes identically.\\
Let $\rho(\alpha,\beta)=q^{M+k},\ M\in\Z,\ k\in\N$. Then $$\supp \chimi{M+1}\cup\supp \chimibet{M+1}=\emptyset$$ and
\begin{multline*}
\E\left(\chimi{M+1},\chimibet{M+1}\right)\\
\shoveleft = \underset{\SB\times\SB\setminus d}{\int\int}\left(\chimi{M+1}(\xi)-\chimi{M+1}(\eta)\right)\left(\chimibet{M+1}(\xi)-\chimibet{M+1}(\eta)\right)J(d\xi,d\eta)\\
\shoveleft = -2\underset{\SB\times\SB\setminus d}{\int\int}\chimi{M+1}(\xi)\chimibet{M+1}(\eta)J(d\xi,d\eta)\\
\shoveleft = -2J\left(\ami{M+1},\bmi{M+1}\right).\\
\end{multline*}
On the other hand by \eqref{4-7}
\begin{multline*}
\E\left(\chimi{M+1},\chimibet{M+1}\right)=\left(H^\frac{1}{2}\chimi{M+1},H^\frac{1}{2}\chimibet{M+1}\right)\\
=\left(\chimi{M+1},H{\chimibet{M+1}}\right)=-\mu\left(\ami{M+1}B^{-1}(\beta,k,M)\left(a(M+k+1)-a(M+k)\right)\right)
\end{multline*}
Thus \begin{multline}\label{4-17}
J\left(\ami{M+1},\bmi{M+1}\right)=\tfrac{1}{2}\mu\left(\ami{M+1}\right)\\
\times B^{-1}(\beta,k,M)\left(a(M+k+1)-a(M+k)\right)
\end{multline} which determines the measure $J$ uniquely.
\begin{rem*}
The formula \eqref{4-17} seems to contradict the symmetry of $J$. However a direct computation shows that
$$\mu\left(\ami{M+1}\right)B^{-1}(\beta,k,M)=\mu\left(\bmi{M+1}\right)B^{-1}(\alpha,k,M)$$
so that after all $J$ is symmetric.
\qed \end{rem*}
To see that $k=0$ observe that by \eqref{4-7}
$$\E\left(\chimi{M+1},\chimi{M+1}\right)=a(M)\mu\left(\ami{M+1}\right).$$
Since $\E^{(c)}=0$ it follows from \eqref{4-16} that
\begin{multline}\label{4-18}
\int_{\SB}\chimi{M+1}(\xi)k(d\xi)=k\left(\ami{M+1}\right)=a(M)\mu\left(\ami{M+1}\right)\\
-\underset{\SB\times\SB}{\int\int}\left\vert\chimi{M+1}(\xi)-\chimi{M+1}(\eta)\right\vert^2J(d\xi,d\eta)
\end{multline}
To compute the latter integral we observe that the integrand is different from zero (and then equals 1) only if either $\xi\in\ami{M+1}$ and $\eta\not\in\ami{M+1}$ or $\eta\in\ami{M+1}$ and $\xi\not\in\ami{M+1}$.Thus
\begin{multline}\label{4-19}
\underset{\SB\times\SB\setminus d}{\int\int}\left\vert\chimi{M+1}(\xi)-\chimi{M+1}(\eta)\right\vert^2 J(d\xi,d\eta)\\
=2\sum_{k=1}^\infty\sum_\beta J\left(\ami{M+1},\bmi{M+1}\right),
\end{multline}
where the second sum runs over all the balls $\bmi{M+1}$ such that $\rho(\alpha,\beta)=q^{M+k}$. We have then
\begin{multline*}
\sum_\beta J\left(\ami{M+1},\bmi{M+1}\right)=\sum_\beta\tfrac{1}{2}\mu\left(\ami{M+1}\right)\\
\times B^{-1}(\beta,k,M)\left(a(M+k+1)-a(M+k)\right)=\tfrac{1}{2}\mu\left(\ami{M+1}\right)\left(a(M+k+1)-a(M+k)\right).
\end{multline*}
and \eqref{4-19} equals
$$\mu\left(\ami{M+1}\right)\sum_{k=1}^\infty\left(a(M+k+1)-a(m+k)\right)=a(M)\mu\left(\ami{M+1}\right),$$
and by \eqref{4-18} the measure $k$ must vanish. As is well known the three terms in the Beurling-Deny representation have an interpretation in terms of corresponding stochastic process. Namely, $\E^{(c)}$ is associated with a diffusion process, $J$ the jump measure for a jump process and $k$ the killing measure. Our discussion shows that the process defined by the transition function \eqref{3-42}  constructed in section 3 is a purely jump process. The fact that the killing measure vanish is a result of the construction we applied. For similar construction on $p$-adics yielding also the killing measure see \cite{Myst}. The absence of the diffusion part is a consequence of the fact that $\SB$ is totally disconnected.\\
\textbf{Acknowledgments}.
The second author gratefully acknowledges financial support from SFB611 (Bonn) as well as the hospitality of the Institute of Applied Mathematics, University of Bonn.

\end{document}